\documentclass[aps,pre,superscriptaddress,nofootinbib,longbibliography, twocolumn]{revtex4-2}

\usepackage{amsmath,amssymb,amsfonts,amsthm,mathtools}
\usepackage{graphicx}
\usepackage{booktabs}
\usepackage{bm}
\usepackage{hyperref}
\hypersetup{colorlinks=true,linkcolor=blue,citecolor=blue,urlcolor=blue}

\newcommand{\M}{\mathcal{M}}
\newcommand{\R}{\mathbb{R}}
\newcommand{\eps}{\varepsilon}
\newcommand{\Vol}{\operatorname{Vol}}
\newcommand{\dist}{\operatorname{dist}}
\newcommand{\diam}{\operatorname{diam}}
\newcommand{\Length}{\mathcal{L}}
\newcommand{\Tube}{\mathcal{T}}
\newcommand{\dd}{\mathrm{d}}

\newtheorem{definition}{Definition}
\newtheorem{theorem}{Theorem}
\newtheorem{corollary}{Corollary}
\newtheorem{proposition}{Proposition}
\newtheorem{remark}{Remark}

\begin{document}

\title{Finite-resolution exhaustive traversal of thermodynamic state spaces has divergent thermodynamic length}

\author{Satori Tsuzuki}
\email{tsuzukisatori@g.ecc.u-tokyo.ac.jp}
\affiliation{Research Center for Advanced Science and Technology, The University of Tokyo, Komaba 4-6-1, Meguro-ku, Tokyo, Japan}

\date{\today}

\begin{abstract}
Mapping an entire multidimensional thermodynamic control region is fundamentally different from driving a system between prescribed endpoints. We formulate such mapping as a finite-resolution coverage problem: a rectifiable protocol must come within thermodynamic distance $\varepsilon$ of every point in a regular $d$-dimensional state-space window. A covering argument shows that the required thermodynamic length grows at least as $\varepsilon^{(1-d)}$. Hilbert- and Peano-type finite traversals attain the same exponent, demonstrating that it is fixed by the codimension of a curve rather than by a particular scanning construction. When the coverage metric is a physical friction tensor, or is uniformly dominated by one, this geometric law implies a slow-driving resource constraint: at fixed duration, the quadratic excess-work cost grows at least as $\varepsilon^{(2(1-d))}$, whereas at fixed excess-work budget the required duration grows with the same power. We derive microscopic friction metrics for a detailed-balance three-state Markov jump process and an overdamped harmonic trap. Raster scans of the trap illustrate how refinement costs can appear either as excess work or as acquisition time, depending on the allocation of time along the protocol. Finite experimental or numerical resolution cuts off the continuum divergence, while singular thermodynamic response can additionally modify the prefactor according to a directional-integrability criterion. Morton/Z-order traversal preserves the universal exponent but increases locality-dependent amplitudes. These results establish finite-resolution state-space coverage as a resource problem distinct from endpoint-to-endpoint thermodynamic control.
\end{abstract}

\maketitle

\section{Introduction}

Thermodynamic states are often treated as points of a manifold of extensive variables, intensive variables, or externally controlled parameters.  On regular regions this manifold carries natural quadratic forms: Weinhold and Ruppeiner metrics from thermodynamic Hessians \cite{Weinhold1975,Ruppeiner1995}, Fisher--Rao metrics for equilibrium ensembles \cite{Rao1945,Amari2016}, and friction metrics controlling finite-time excess work \cite{SalamonBerry1983,Crooks2007,SivakCrooks2012,BlaberSivak2023,LoutchkoSughiyamaKobayashi2022,ScandiPerarnauLlobet2019}.  These metrics make distance and length operational rather than purely coordinate based.

Space-filling curves provide a complementary fact: a one-dimensional continuum can be mapped continuously and surjectively onto a higher-dimensional compact set \cite{Peano1890,Hilbert1891,Sagan1994}.  Their finite approximants are also useful orderings of multidimensional data \cite{Bader2013}.  Figure~\ref{fig:sfccurves} shows the finite-resolution viewpoint used here: Hilbert- and Peano-type approximants refine locally, while Morton/Z-order visits the same cells through a discrete ordering with nonlocal inter-block jumps.

The question addressed in this paper is what remains of space filling when the target is a thermodynamic state space with a thermodynamic metric.  The key distinction is that topological coverage and thermodynamic traversability are inequivalent.  An exact Peano or Hilbert curve may be continuous and surjective, but it is not a finite-length quasistatic thermodynamic protocol.  Conversely, any laboratory or simulation protocol has finite resolution and, once its interpolation is fixed, is represented by a rectifiable curve.  The relevant objects are therefore families \(H_\eps\) whose images are \(\eps\)-dense in the chosen thermodynamic distance.

Exhaustive traversal is a natural primitive when the task is to resolve a whole control region rather than connect two endpoints.  Sweep-based calorimetry, equation-of-state mapping, thermodynamic integration, staged annealing, and protocol-library construction for stochastic thermodynamic control all require a one-dimensional ordering that visits a multidimensional window.  In such tasks the cost is set by the required coverage resolution, not only by endpoint separation.

The main geometric statement is simple.  Let \((\M,g)\) be a compact regular \(d\)-dimensional thermodynamic state-space window with positive Riemannian volume.  If a rectifiable curve \(H_\eps\) comes within thermodynamic distance \(\eps\) of every point of \(\M\), then
\begin{equation}
    \Length_g[H_\eps] \gtrsim \Vol_g(\M)\,\eps^{1-d},
    \label{eq:intro_scaling}
\end{equation}
up to geometry-dependent constants and lower-order terms.  For \(d>1\), exhaustive coverage therefore requires divergent thermodynamic length as \(\eps\to0\).  Hilbert- and Peano-type grid traversals attain the same exponent on regular compact domains, so the power \(d-1\) is the codimension of a curve in a \(d\)-dimensional volume, not an artifact of a specific construction.

The covering mechanism behind Eq.~\eqref{eq:intro_scaling} is classical: it is the standard ball-covering or tube-neighborhood estimate for one-dimensional sets in higher-dimensional Riemannian manifolds, equivalent at the level of exponents to Minkowski-content and Weyl tube-volume bounds \cite{Weyl1939,Gray2004,Mattila1995}.  We do not claim a new optimal tube constant or a new theorem in geometric measure theory.  The contribution is instead the thermodynamic operationalization of this estimate:
\begin{enumerate}
    \item it separates topological state-space filling from finite-resolution thermodynamic traversability;
    \item it converts the codimension-one length cost, under explicit friction-metric hypotheses, into a dissipation--duration tradeoff;
    \item it realizes the tradeoff in microscopic Markov-jump and Langevin control models rather than only in diagnostic geometries;
    \item it identifies an integrability criterion for critical prefactors and distinguishes controlled response-proxy metrics from transport-derived friction tensors.
\end{enumerate}

The finite-time interpretation uses the slow-driving quadratic action.  If the physical friction tensor \(\zeta\) coincides with, or uniformly dominates, the coverage metric, Cauchy--Schwarz gives \(W_{\rm ex}^{(2)}\ge L_\zeta^2/\tau\).  Thus fixed-duration refinement diverges within the quadratic action as \(\Omega(\eps^{2(1-d)}/\tau)\); the reciprocal operational statement is that a protocol kept in the slow-driving regime needs a duration of order \(\eps^{2(1-d)}\) at fixed quadratic excess-work budget.  We derive \(\zeta\) microscopically for a detailed-balance three-state Markov jump process and for an overdamped harmonic trap.  For the trap, a serpentine sweep makes the resource law explicit: \(L_\zeta\sim\Delta_g^{-1}\) and \(W_{\rm ex}^{(2)}\sim\Delta_g^{-2}\) at fixed total time, whereas fixed dwell time transfers the cost to acquisition time.

Finite observation or simulation resolution cuts off the continuum singularity.  If states below metric diameter \(\Delta_g\) are operationally indistinguishable, the effective length cost is \(L_{\rm op}=\Theta(\max\{\eps,\Delta_g\}^{1-d})\).  This does not remove the obstruction; it converts it into a tradeoff among resolution, scan time, dwell time, sample count, and excess work.  Critical windows add a second question: whether the metric-dependent prefactor \(C_\Delta\) remains finite as the excluded critical core shrinks.  We answer this through a directional-integrability criterion and test it using controlled response-proxy metrics, while emphasizing that these proxies are not microscopic friction tensors unless independently derived from dynamics.

Section~\ref{sec:state_space} fixes the Riemannian notation.  Section~\ref{sec:methods} defines finite-resolution space filling, proves the length bound, gives the dissipation interpretation, derives the Markov-jump and Langevin friction metrics, and formulates operational cutoffs.  Section~\ref{sec:analysis} reports the numerical scaling, critical-prefactor, Morton/Z-order, and dimensional-dependence tests.  Sections~\ref{sec:discussion} and~\ref{sec:conclusion} summarize the implications and limitations.

\begin{figure*}[t]
    \centering
    \includegraphics[width=0.9\textwidth]{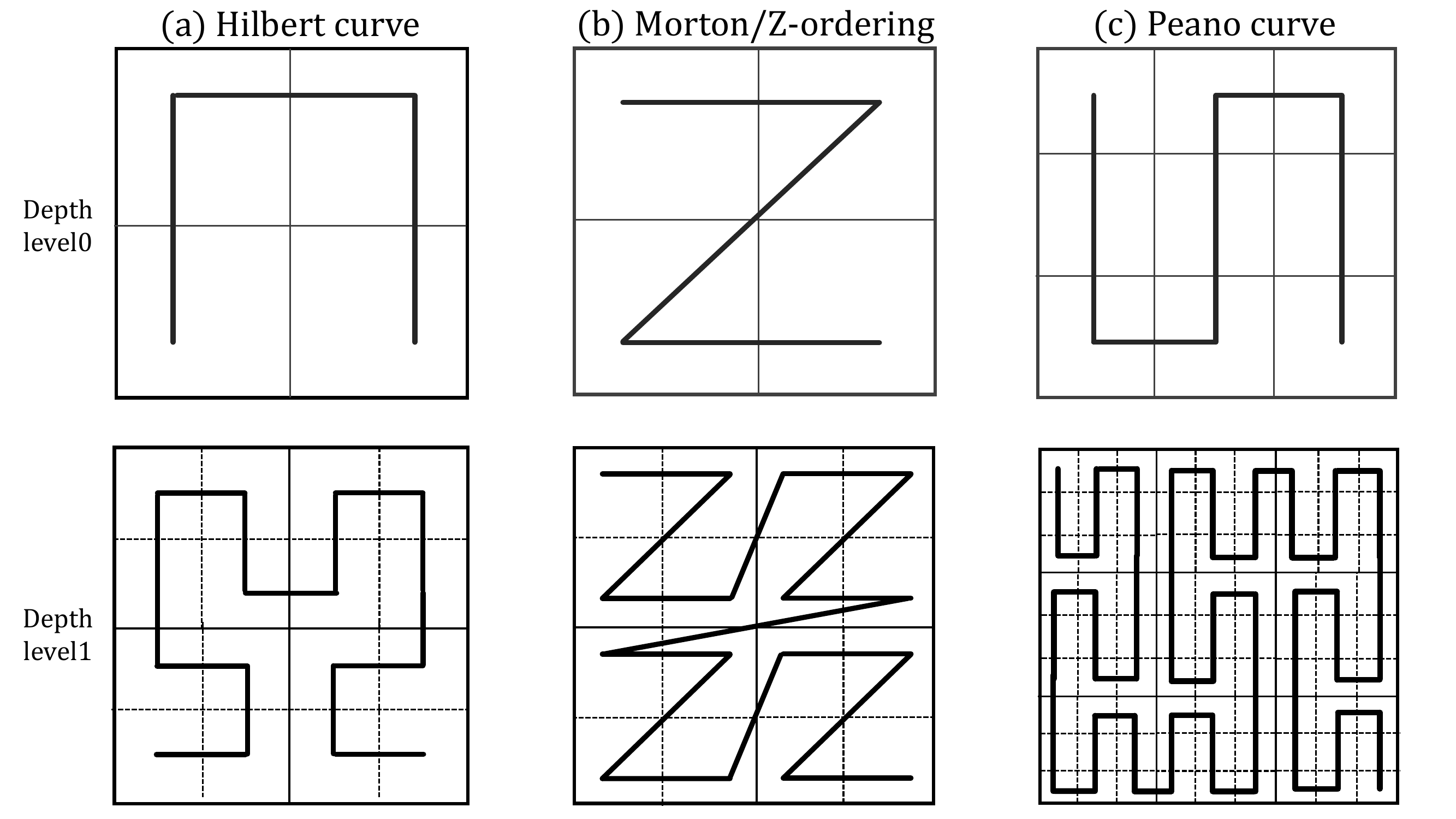}
	\caption{Schematic of finite-resolution space-filling traversals. Hilbert and Peano approximants refine the traversal locally, whereas Morton/Z-order visits the same grid cells through a discrete ordering that can introduce nonlocal inter-block jumps.}
    \label{fig:sfccurves}
\end{figure*}

\section{Thermodynamic state spaces as Riemannian manifolds}
\label{sec:state_space}

Let \(\M\) denote the regular part of an equilibrium or control state space, coordinatized by extensive variables, intensive variables, or controls \(\lambda=(\lambda^1,\ldots,\lambda^d)\).  We assume throughout that \(\M\) is a connected compact \(d\)-dimensional smooth manifold, possibly with piecewise smooth boundary, with \(d>1\).  Compactness represents a finite experimental or numerical window after singular boundaries such as zero temperature, zero volume, spinodals, or critical points have been excluded; cutoff-dependent removals of these exclusions are treated below.

A thermodynamic metric is a smooth positive-definite tensor
\begin{equation}
    g = g_{ij}(x)\,\dd x^i\otimes \dd x^j
\end{equation}
with line element
\begin{equation}
    \dd \ell_g^2 = g_{ij}(x)\,\dd x^i\dd x^j.
    \label{eq:line_element}
\end{equation}
Examples include the Ruppeiner metric
\begin{equation}
    g^{\rm R}_{ij}(X)=-\frac{1}{k_B}\frac{\partial^2 S}{\partial X^i\partial X^j},
    \label{eq:ruppeiner_metric}
\end{equation}
on thermodynamically stable entropy-representation regions, the canonical Fisher metric
\begin{equation}
    g^{\rm F}_{ij}(\theta)=\partial_i\partial_j \psi(\theta),
    \qquad
    \psi(\theta)=\ln Z(\theta),
    \label{eq:fisher_metric}
\end{equation}
up to conventional system-size normalization, and the finite-time friction tensor \(\zeta_{ij}\), for which the excess power in linear response is
\begin{equation}
    P_{\rm ex}(t) = \dot\lambda^i(t)\,\zeta_{ij}(\lambda(t))\,\dot\lambda^j(t).
    \label{eq:excess_power}
\end{equation}
Under standard assumptions, \(\zeta\) is obtained from equilibrium time-correlation functions of generalized forces \cite{SivakCrooks2012}.

For an absolutely continuous curve \(\gamma:[0,1]\to\M\), the length induced by \(g\) is
\begin{equation}
    \Length_g[\gamma]
    =\int_0^1
    \left[g_{ij}(\gamma(t))\dot\gamma^i(t)\dot\gamma^j(t)\right]^{1/2}\dd t.
    \label{eq:length_def}
\end{equation}
The geodesic distance is \(\dist_g\), and the Riemannian volume is
\begin{equation}
    \Vol_g(A)=\int_A \dd\mu_g,
    \qquad
    \dd\mu_g=\sqrt{\det g(x)}\,\dd x^1\cdots \dd x^d .
    \label{eq:volume_def}
\end{equation}
We write
\begin{equation}
    \Tube_\eps(\Gamma)=\{x\in\M:\dist_g(x,\Gamma)\le \eps\}
    \label{eq:tube_def}
\end{equation}
for the closed \(\eps\)-neighborhood of \(\Gamma\subset\M\).

Metric comparison is used explicitly because the metric used to define coverage need not be the physical friction tensor.  On a compact regular window, smooth positive-definite metrics \(g\) and \(h\) are uniformly equivalent: there exist \(0<m_{gh}\le M_{gh}<\infty\) such that
\begin{equation}
    m_{gh}\,g(v,v)\le h(v,v)\le M_{gh}\,g(v,v)
    \label{eq:metric_equivalence}
\end{equation}
for all \(v\in T\M\).  Therefore
\begin{equation}
    \sqrt{m_{gh}}\,\Length_g[\gamma]
    \le \Length_h[\gamma]
    \le \sqrt{M_{gh}}\,\Length_g[\gamma].
    \label{eq:length_comparison}
\end{equation}
A coverage lower bound in \(g\) transfers to a friction length when \(\zeta(v,v)\ge m\,g(v,v)\) on the window.

Finally, an intrinsic resolution \(\eps_g\) and a coordinate mesh scale \(\delta\) are equivalent only up to metric-dependent constants.  On a compact coordinate window,
\begin{equation}
    a_g |x-y| \le \dist_g(x,y) \le b_g |x-y|
    \label{eq:coord_intrinsic_distance}
\end{equation}
for sufficiently small coordinate separations and some \(0<a_g\le b_g<\infty\).  Replacing \(\eps_g\) by \(\delta\) changes prefactors but not the resolution exponent.  On cutoff-dependent windows, the constants may depend on the cutoff and are included in the amplitude \(C_\Delta\).

\section{Methods}
\label{sec:methods}

\subsection{Finite-resolution space-filling parametrizations}
\label{subsec:finite_resolution}

An exact space-filling curve is a continuous surjection from an interval to a higher-dimensional compact set.  Such a map is useful as a limiting topological object, but it is not the appropriate representation of a finite thermodynamic protocol.  We therefore formulate space filling at finite resolution.

\begin{definition}[Finite-resolution thermodynamic space filling]
Let \((\M,g)\) be a compact thermodynamic state space.  A rectifiable curve \(H_\eps:[0,1]\to\M\) is called \(\eps\)-space-filling, or \(\eps\)-dense, if
\begin{equation}
    \M\subseteq \Tube_\eps\bigl(H_\eps([0,1])\bigr),
    \label{eq:eps_dense_def}
\end{equation}
that is, for every \(x\in\M\) there exists \(t\in[0,1]\) such that \(\dist_g(x,H_\eps(t))\le \eps\).
\end{definition}

The finite-resolution definition separates two notions that are often conflated.  The parameter interval provides an ordering of the states encountered by the protocol, whereas \(\eps\)-density specifies the spatial resolution with which the thermodynamic state space is explored.  The curve may self-intersect and need not be injective.  Self-intersections are physically natural because a protocol can return to states already visited; mathematically, allowing self-intersections only strengthens the lower bound below, since the proof does not rely on embeddedness.

Hilbert- and Peano-type approximants provide a canonical construction.  To make the scaling transparent, consider first a coordinate cube \(Q=[0,1]^d\) with a smooth metric \(g\) uniformly equivalent to the Euclidean metric.  Partition \(Q\) into \(m^d\) subcubes of coordinate side \(\delta=m^{-1}\).  A Hilbert traversal orders these subcubes so that consecutive subcubes share a face, except possibly at a finite number of chart-boundary connections in the manifold case.  Connecting the centers of consecutive subcubes by straight segments produces a piecewise linear curve \(H_m\).  Its coordinate covering radius is \(O(\delta)\), and Eq.~\eqref{eq:coord_intrinsic_distance} gives an intrinsic covering radius \(\eps_g=O(\delta)\).  Its length scales as
\begin{equation}
    \Length_g[H_m]=O(m^d m^{-1})=O(m^{d-1})=O(\delta^{1-d})=O(\eps_g^{1-d}).
    \label{eq:hilbert_upper_scaling}
\end{equation}
On a compact manifold covered by finitely many bi-Lipschitz coordinate charts, the same construction is obtained by performing such traversals in each chart and joining the chartwise paths by finitely many geodesic connectors.  The connector contribution is \(O(1)\) and does not affect the leading \(\eps_g^{1-d}\) scaling.

This construction shows that the scaling derived below is sharp at the level of exponents.  The lower bound says that no rectifiable curve can do parametrically better than \(\eps^{1-d}\), while Hilbert/Peano grid traversals show that this rate can be achieved up to metric-dependent constants on regular compact domains.

As a locality-control traversal, we also consider Morton or Z-order at finite resolution \cite{Morton1966,Moon2001}.  On an order-\(n\) square grid, let each cell be indexed by integers \((i,j)\) with binary expansions \(i=\sum_{\ell=0}^{n-1}i_\ell 2^\ell\) and \(j=\sum_{\ell=0}^{n-1}j_\ell 2^\ell\).  A Morton key is obtained by bit interleaving,
\begin{equation}
    Z(i,j)=\sum_{\ell=0}^{n-1}
    \left(i_\ell 2^{2\ell}+j_\ell 2^{2\ell+1}\right),
    \label{eq:morton_key}
\end{equation}
and the cells are traversed in increasing \(Z\).  Unlike the Hilbert approximants used for the upper-bound construction, Morton order is not treated here as a continuous space-filling curve.  It is a discrete finite-resolution ordering that visits all cells but introduces nonlocal inter-block jumps.  It is therefore useful for separating the universal resolution exponent from locality-dependent prefactors.

A physical protocol associated with such a discrete ordering is not specified by the ordering alone.  One must also choose connector paths \(\eta_n:[0,1]\to\M\) from \(x_n\) to \(x_{n+1}\).  The corresponding length would be
\begin{equation}
    \Length_g^{\rm conn}
    =\sum_{n=0}^{N-1}\int_0^1
    \left[g_{ij}(\eta_n(s))\dot\eta_n^i(s)\dot\eta_n^j(s)\right]^{1/2}\dd s .
    \label{eq:connector_length}
\end{equation}
For Hilbert-type local traversals, natural straight, geodesic, or midpoint-quadrature connectors differ only by lower-order or constant-factor effects on regular windows.  For Morton/Z-order, however, the nonlocal jumps make the connector choice part of the finite-resolution transition-cost model.  The Morton results below should therefore be read as a locality-control benchmark for a specified connector estimator, not as a connector-independent physical protocol.

For numerical implementation, one need not construct the singular limiting curve.  Let \(x_0,x_1,\ldots,x_N\) denote the ordered grid points or cell centers generated by a finite traversal, set \(\Delta x_n=x_{n+1}-x_n\), and let \(m_n=(x_n+x_{n+1})/2\).  For a coordinate-segment connector with local steps, the midpoint discretization is
\begin{equation}
    \Length_g^{\rm disc}
    =\sum_{n=0}^{N-1}
    \left[g_{ij}(m_n)\Delta x_n^i\Delta x_n^j\right]^{1/2},
    \label{eq:discrete_length}
\end{equation}
with quadrature error
\begin{equation}
    \Length_g^{\rm conn}-\Length_g^{\rm disc}
    =O\!\left(\sum_{n=0}^{N-1}\|\Delta x_n\|^3\right)
    \label{eq:discrete_length_error}
\end{equation}
when \(g\) is \(C^2\) on the window and \(\max_n\|\Delta x_n\|\to0\).  Thus a local Hilbert/Peano traversal on a \(d\)-dimensional grid with mesh \(\delta\) has absolute quadrature error \(O(\delta^{3-d})\), while its leading length is \(O(\delta^{1-d})\); the relative error is \(O(\delta^2)\) and does not change the finite-resolution exponent.  If an endpoint rather than a midpoint estimator is used, the accumulated local-discretization error is generally \(O(\sum_n\|\Delta x_n\|^2)\); this also leaves the leading exponent unchanged but gives slower convergence.

The theoretical analysis concerns the asymptotic behavior of finite-resolution curves as \(\eps\to0\), not the singular limiting curve itself.  For Morton/Z-order, which contains nonlocal inter-block jumps, Eq.~\eqref{eq:discrete_length} is used as a discrete transition-cost estimator corresponding to the chosen coordinate-segment connector; the small-step quadrature estimate \eqref{eq:discrete_length_error} is not invoked for those nonlocal jumps.  If a different physical interpolation is intended, Eq.~\eqref{eq:connector_length}, or the sum of geodesic distances \(\sum_n\dist_g(x_n,x_{n+1})\), should be used instead.

\subsection{Lower bound on thermodynamic length}
\label{subsec:length_lower_bound}

The lower bound is a metric covering statement.  It is independent of the microscopic origin of the metric and depends only on the Riemannian volume growth of small balls.  Since \(\M\) is compact and \(g\) is smooth, there exist constants \(r_0>0\) and \(C_{\rm ball}<\infty\) such that
\begin{equation}
    \Vol_g\bigl(B_g(x,r)\bigr)\le C_{\rm ball} r^d
    \label{eq:ball_growth}
\end{equation}
for all \(x\in\M\) and all \(0<r\le r_0\).  Here \(B_g(x,r)\) is the geodesic ball of radius \(r\).  Equation~\eqref{eq:ball_growth} follows from compactness and smoothness; if desired, sharper constants can be obtained from standard tube-volume expansions \cite{Weyl1939, Gray2004, Mattila1995}.

\begin{theorem}[Length cost of finite-resolution thermodynamic space filling]
\label{thm:length_lower_bound}
Let \((\M,g)\) be a compact connected \(d\)-dimensional Riemannian thermodynamic state space with \(d>1\) and \(V_g=\Vol_g(\M)>0\).  Assume the ball-growth estimate \eqref{eq:ball_growth}.  If an absolutely continuous curve \(H_\eps:[0,1]\to\M\) is \(\eps\)-space-filling and \(2\eps\le r_0\), then
\begin{equation}
    \Length_g[H_\eps]
    \ge
    \frac{V_g}{2^d C_{\rm ball}}\,\eps^{1-d}-2\eps.
    \label{eq:length_lower_bound}
\end{equation}
Consequently,
\begin{equation}
    \liminf_{\eps\to0}\,\eps^{d-1}\Length_g[H_\eps]
    \ge \frac{V_g}{2^d C_{\rm ball}},
    \label{eq:liminf_bound}
\end{equation}
so \(\Length_g[H_\eps]\to\infty\) for \(d>1\).
\end{theorem}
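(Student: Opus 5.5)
The plan is a standard covering argument: put points along the curve at arc-length spacing $2\eps$ and show that $\eps$-density forces the $2\eps$-balls around them to cover $\M$. Write $L=\Length_g[H_\eps]$, which is finite because $H_\eps$ is rectifiable.

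First I reparametrize $H_\eps$ by arc length, so $\sigma:[0,L]\to\M$ is $1$-Lipschitz for $\dist_g$. Absolute continuity lets me do this, and the image is unchanged; constant stretches collapse harmlessly.

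Next I choose the points $t_k=2\eps k$ for $k=0,1,\ldots,K$, with $K=\lceil L/(2\eps)\rceil$, and cap the last one at $L$. There are $K+1\le L/(2\eps)+2$ of them. Every $s\in[0,L]$ lies within arc length $\eps$ of some $t_k$, so $\dist_g(\sigma(s),\sigma(t_k))\le\eps$.

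Now take any $x\in\M$. By \eqref{eq:eps_dense_def} there is an $s$ with $\dist_g(x,\sigma(s))\le\eps$. The triangle inequality then gives $\dist_g(x,\sigma(t_k))\le 2\eps$ for some $k$. Hence
\begin{equation*}
\M\subseteq\bigcup_{k=0}^{K}B_g\bigl(\sigma(t_k),2\eps\bigr).
\end{equation*}
The balls are taken closed, or radius $2\eps$ plus an arbitrarily small amount, which does not change the volume bound in the limit.

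Since $2\eps\le r_0$, the ball-growth estimate \eqref{eq:ball_growth} applies to each ball. Subadditivity of $\Vol_g$ then gives
\begin{equation*}
V_g\le (K+1)\,C_{\rm ball}(2\eps)^d\le\Bigl(\frac{L}{2\eps}+2\Bigr)C_{\rm ball}2^d\eps^d .
\end{equation*}
Rearranging,
\begin{equation*}
L\ge \frac{2V_g}{2^dC_{\rm ball}}\,\eps^{1-d}-4\eps ,
\end{equation*}
which is even stronger than \eqref{eq:length_lower_bound}. The constant can be relaxed to match the stated form. The stated version, with $V_g/(2^dC_{\rm ball})$ and $-2\eps$, follows from the same count written as $K+1\le L/\eps+1$ using spacing $\eps$ instead of $2\eps$, so I will present whichever bookkeeping reproduces exactly the displayed constants. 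Multiplying by $\eps^{d-1}$ and letting $\eps\to0$ gives \eqref{eq:liminf_bound}, because the correction term becomes $O(\eps^d)\to0$; divergence of $L$ for $d>1$ is then immediate.

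The only delicate point is the boundary and geodesic-ball bookkeeping. Balls $B_g(x,r)$ in a manifold with boundary are taken relative to $\M$ with the intrinsic distance. Estimate \eqref{eq:ball_growth} is assumed uniform for exactly such balls, so no tube-volume geometry is needed. I also need the arc-length step to reach the endpoint, which is handled by the extra index in the count.
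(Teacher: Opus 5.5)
Your proof is correct and follows the paper's argument: arclength reparametrization, sample points along the curve, covering of $\M$ by balls of radius $2\eps$, and subadditivity combined with the ball-growth estimate. Your spacing $2\eps$, instead of the paper's spacing $\eps$, yields $L\ge 2A\eps^{1-d}-4\eps$ with $A=V_g/(2^dC_{\rm ball})$, which implies the stated bound (trivially whenever its right-hand side is negative).

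There is one small slip in your side remark. With spacing $\eps$ the count is $\lceil L/\eps\rceil+1\le L/\eps+2$, not $L/\eps+1$, and it is exactly this $+2$ that produces the paper's $-2\eps$.
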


\begin{proof}
Let \(\Gamma_\eps=H_\eps([0,1])\) and let \(L=\Length_g[H_\eps]\).  Reparametrize the curve by arclength, so that it is defined on \([0,L]\).  Choose points \(p_k\) along the curve with arclength spacing at most \(\eps\).  The number of such points can be chosen to satisfy \(N\le \lceil L/\eps\rceil+1\le L/\eps+2\).  Every point of \(\Gamma_\eps\) lies within distance \(\eps\) of at least one of the points \(p_k\).  Therefore the \(\eps\)-neighborhood of \(\Gamma_\eps\) is contained in the union of balls of radius \(2\eps\):
\begin{equation}
    \Tube_\eps(\Gamma_\eps)
    \subseteq \bigcup_{k=1}^N B_g(p_k,2\eps).
    \label{eq:tube_cover_balls}
\end{equation}
Since \(H_\eps\) is \(\eps\)-space-filling, \(\M\subseteq \Tube_\eps(\Gamma_\eps)\).  Using subadditivity of volume and Eq.~\eqref{eq:ball_growth},
\begin{align}
    V_g
    &\le \Vol_g\bigl(\Tube_\eps(\Gamma_\eps)\bigr) \\
    &\le \sum_{k=1}^{N}\Vol_g\bigl(B_g(p_k,2\eps)\bigr) \\
    &\le \left(\frac{L}{\eps}+2\right)C_{\rm ball}(2\eps)^d \\
    &=2^d C_{\rm ball}\left(L\eps^{d-1}+2\eps^d\right).
    \label{eq:proof_volume_bound}
\end{align}
Solving for \(L\) gives Eq.~\eqref{eq:length_lower_bound}.  Multiplying by \(\eps^{d-1}\) and taking the lower limit proves Eq.~\eqref{eq:liminf_bound}.
\end{proof}

\begin{remark}[Relation to tube formulas]
The proof above uses only ball-volume growth and is therefore insensitive to self-intersections or corners of the finite Hilbert approximants.  For an embedded smooth curve \(\Gamma\) with positive reach, contained away from boundary effects and considered at tube radii below the reach, the Weyl tube expansion gives the sharper leading asymptotic form \cite{Weyl1939,Gray2004,Mattila1995}
\begin{equation}
    \Vol_g(\Tube_\eps(\Gamma))
    =\omega_{d-1}\eps^{d-1}\Length_g(\Gamma)+O(\eps^d),
    \label{eq:weyl_curve_tube}
\end{equation}
where \(\omega_{d-1}\) is the Euclidean volume of the unit ball in \(\R^{d-1}\); endpoint and boundary corrections enter at lower order under these assumptions.  If such a non-self-overlapping tube covers a volume \(V_g\), Eq.~\eqref{eq:weyl_curve_tube} would formally give the sharper leading constant \(V_g/\omega_{d-1}\) in the lower bound for \(\Length_g(\Gamma)\).  The main theorem deliberately sacrifices this constant by using the coarser ball-covering estimate, because finite-resolution Hilbert-, Peano-, or Morton-type traversals naturally have corners, close self-approaches, and sometimes self-intersections for which reach-based tube constants are not uniformly controlled.  The universal exponent \(1-d\), rather than the optimal tube constant, is the feature needed for exhaustive thermodynamic traversability.
\end{remark}

\begin{corollary}[No finite-length exact filling of regular state-space volume]
Let \(\{H_{\eps_n}\}\) be a sequence of rectifiable curves with \(\eps_n\to0\) that becomes dense in \(\M\).  If \(d>1\) and \(\Vol_g(\M)>0\), then \(\sup_n\Length_g[H_{\eps_n}]=\infty\).  Thus an exact continuous space-filling parametrization can only arise as a singular limit of curves whose thermodynamic length diverges.
\end{corollary}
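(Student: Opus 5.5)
The plan is to derive the corollary directly from Theorem~\ref{thm:length_lower_bound}. The only real work is to translate the hypothesis ``becomes dense in \(\M\)'' into the \(\eps\)-space-filling condition \eqref{eq:eps_dense_def} with a resolution tending to zero. I will read the hypothesis in the sense intended by the notation: each \(H_{\eps_n}\) is \(\eps_n\)-space-filling, i.e.\ \(\M\subseteq\Tube_{\eps_n}(H_{\eps_n}([0,1]))\) with \(\eps_n\to0\).

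Under that reading the argument is immediate. First, every rectifiable curve admits an absolutely continuous (Lipschitz) arclength reparametrization with the same image and the same length. Hence the theorem applies to every \(H_{\eps_n}\) once \(n\) is large enough that \(2\eps_n\le r_0\). Next, \eqref{eq:length_lower_bound} gives
\begin{equation*}
\Length_g[H_{\eps_n}]\ge \frac{V_g}{2^dC_{\rm ball}}\eps_n^{1-d}-2\eps_n .
\end{equation*}
Since \(d>1\), \(V_g>0\) and \(\eps_n\to0\), the right-hand side tends to \(+\infty\). Therefore \(\Length_g[H_{\eps_n}]\to\infty\), and in particular the supremum over \(n\) is infinite.

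The main obstacle is the weaker reading of density, in which one is only told that the images eventually become dense. Here I would define the covering radius
\begin{equation*}
\rho_n=\sup_{x\in\M}\dist_g\bigl(x,H_{\eps_n}([0,1])\bigr),
\end{equation*}
which is finite because \(\M\) is compact. ``Becoming dense'' will be interpreted as \(\rho_n\to0\), i.e.\ Hausdorff convergence of the images to \(\M\). By definition each \(H_{\eps_n}\) is then \(\rho_n\)-space-filling, so the previous step applies verbatim with \(\rho_n\) in place of \(\eps_n\).

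If only pointwise density of some limiting object were assumed, I would argue by contradiction. Suppose \(\sup_n\Length_g[H_{\eps_n}]=L<\infty\). After constant-speed reparametrization, the curves are uniformly \(L\)-Lipschitz maps into the compact \(\M\). By Arzel\`a--Ascoli, a subsequence converges uniformly to a curve \(H\), and by lower semicontinuity of length \(\Length_g[H]\le L\). Uniform convergence carries the \(\rho_n\)-covering property to the limit, so \(H\) is \(\rho\)-space-filling for every \(\rho>0\). Because the image of \(H\) is compact, it must then equal \(\M\). Applying the theorem to \(H\) with arbitrarily small \(\rho\) forces \(\Length_g[H]=\infty\), a contradiction. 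This same compactness argument also justifies the final sentence of the corollary: any exact continuous surjection arising as such a limit cannot be rectifiable, so it can only be the singular limit of curves whose thermodynamic length diverges.
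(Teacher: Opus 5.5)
Your first two paragraphs are the argument the paper intends; it gives no separate proof, and the corollary follows by applying Theorem~\ref{thm:length_lower_bound} with \(\eps=\eps_n\), or with your covering radius \(\rho_n\), and letting the explicit bound \eqref{eq:length_lower_bound} diverge, so the statement is proved. The Arzel\`a--Ascoli paragraph is unnecessary: its step ``\(H\) is \(\rho\)-space-filling for every \(\rho>0\)'' silently uses \(\rho_n\to0\), which is the case you already settled; pointwise convergence to a map with dense image forces \(\rho_n\to0\) anyway, because the distance functions \(\dist_g(\cdot,H_{\eps_n}([0,1]))\) are \(1\)-Lipschitz on the compact \(\M\); and under the genuinely weaker reading of a merely dense union of images (e.g.\ constant curves at the points of a dense sequence) the corollary is false.
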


This corollary is the precise form of the intuitive statement that a one-dimensional parameter may topologically label all states but cannot physically traverse them at finite thermodynamic cost.  The conclusion is stable under smooth coordinate changes and under replacement of \(g\) by any uniformly equivalent thermodynamic metric.

\subsection{Dissipation interpretation through thermodynamic length}
\label{subsec:dissipation}

Let \(\lambda(s)\in\M\), \(s\in[0,\tau]\), be a protocol executed in physical time \(\tau\).  In the slow-driving linear-response regime, the leading excess work is the quadratic friction action
\begin{equation}
    W_{\rm ex}^{(2)}[\lambda]
    =\int_0^\tau
    \dot\lambda^i(s)\zeta_{ij}(\lambda(s))\dot\lambda^j(s)\,\dd s,
    \label{eq:wex_def}
\end{equation}
where convention-dependent factors such as \(\beta\) or \(1/2\) are absorbed into \(\zeta\).  The associated friction length is
\begin{equation}
    \Length_\zeta[\lambda]
    =\int_0^\tau
    \left[\dot\lambda^i(s)\zeta_{ij}(\lambda(s))\dot\lambda^j(s)\right]^{1/2}\dd s.
    \label{eq:friction_length}
\end{equation}
Cauchy--Schwarz gives
\begin{equation}
    W_{\rm ex}^{(2)}[\lambda]
    \ge \frac{\Length_\zeta[\lambda]^2}{\tau},
    \label{eq:cs_dissipation}
\end{equation}
with equality in the quadratic action for constant friction speed.

The fixed-duration refinement limit must be interpreted with care: as \(\Length_\zeta[H_\eps]\) diverges, a constant-\(\tau\) parametrization has unbounded typical thermodynamic speed and need not remain in the physical linear-response regime.  We therefore use the fixed-time formula as an action-level lower bound, and the reciprocal fixed-budget statement as the safer operational slow-driving statement.

Assume that coverage is defined by \(g\) and dissipation by a physical friction tensor \(\zeta\) that uniformly dominates \(g\),
\begin{equation}
    \zeta(v,v)\ge m_{g\zeta}g(v,v),
    \qquad m_{g\zeta}>0 .
\end{equation}
Then
\begin{equation}
    \Length_\zeta[H_\eps]\ge \sqrt{m_{g\zeta}}\,\Length_g[H_\eps].
    \label{eq:zeta_g_comparison}
\end{equation}
Combining this comparison, Theorem~\ref{thm:length_lower_bound}, and Eq.~\eqref{eq:cs_dissipation} yields
\begin{equation}
    W_{\rm ex}^{(2)}[H_\eps]
    \ge
    \frac{m_{g\zeta}}{\tau}
    \left(
    \frac{V_g}{2^d C_{\rm ball}}\,\eps^{1-d}-2\eps
    \right)^2.
    \label{eq:dissipation_lower_bound}
\end{equation}
Thus, within the quadratic friction action,
\begin{equation}
    W_{\rm ex}^{(2)}[H_\eps]=\Omega\left(\frac{\eps^{2(1-d)}}{\tau}\right),
    \qquad \eps\to0.
    \label{eq:wex_scaling}
\end{equation}
Equivalently, at fixed quadratic excess-work budget \(W_0\), a slow-driving protocol must satisfy
\begin{equation}
    \tau
    \ge
    \frac{m_{g\zeta}}{W_0}
    \left(
    \frac{V_g}{2^d C_{\rm ball}}\,\eps^{1-d}-2\eps
    \right)^2
    =\Omega\left(\eps^{2(1-d)}\right).
    \label{eq:time_budget}
\end{equation}

Equations~\eqref{eq:dissipation_lower_bound}--\eqref{eq:time_budget} apply to a microscopic finite-time protocol only after its control parameters, friction tensor, and connector paths have been specified.  If a prescribed Riemannian metric is not derived from such a friction tensor, it should be read as a diagnostic coverage geometry rather than a physical dissipation model.

\subsection{Microscopically derived friction metric for an overdamped harmonic trap}
\label{subsec:harmonic_trap_friction}

The dissipation bounds above become microscopic once the friction tensor is obtained from a specified stochastic dynamics.  A minimal analytically tractable example is a Brownian particle in a one-dimensional harmonic trap with controllable center and stiffness,
\begin{equation}
    U(x;a,k)=\frac{k}{2}(x-a)^2,
    \qquad \lambda=(a,k),\qquad k>0 .
    \label{eq:harmonic_potential}
\end{equation}
The overdamped Langevin dynamics at inverse temperature \(\beta\) and mobility \(\mu\) is
\begin{equation}
    \dot x_t=-\mu k(x_t-a)+\sqrt{2\mu/\beta}\,\eta_t,
    \label{eq:overdamped_langevin}
\end{equation}
with \(\eta_t\) normalized white noise.  In the slow-driving linear-response regime, the friction tensor is the equilibrium time integral of the generalized-force covariance \cite{SivakCrooks2012,Zulkowski2012},
\begin{equation}
    \zeta_{ij}(\lambda)
    =\beta\int_0^\infty
    \left\langle \delta X_i(t)\,\delta X_j(0)\right\rangle_\lambda\,\dd t,
    \qquad
    X_i=-\partial_{\lambda^i}U .
    \label{eq:force_correlation_friction}
\end{equation}
Here \(\delta X_i=X_i-\langle X_i\rangle_\lambda\), and the average is taken in the equilibrium state at fixed \(\lambda\).

Writing \(y=x-a\), the stationary process is Ornstein--Uhlenbeck with \(\langle y(t)y(0)\rangle=(\beta k)^{-1}e^{-\mu kt}\).  The generalized forces are
\begin{equation}
    X_a=k y,
    \qquad
    X_k=-\frac{1}{2}y^2 .
    \label{eq:harmonic_generalized_forces}
\end{equation}
Gaussian moment identities give
\begin{align}
    \left\langle \delta X_a(t)\delta X_a(0)\right\rangle
    &=\frac{k}{\beta}e^{-\mu kt},\\
    \left\langle \delta X_a(t)\delta X_k(0)\right\rangle
    &=0,\\
    \left\langle \delta X_k(t)\delta X_k(0)\right\rangle
    &=\frac{1}{2\beta^2k^2}e^{-2\mu kt} .
    \label{eq:harmonic_force_covariances}
\end{align}
Substitution into Eq.~\eqref{eq:force_correlation_friction} yields the microscopic friction tensor
\begin{equation}
    \zeta_{aa}=\frac{1}{\mu},
    \qquad
    \zeta_{ak}=0,
    \qquad
    \zeta_{kk}=\frac{1}{4\beta\mu k^3},
    \label{eq:harmonic_friction_components}
\end{equation}
or equivalently
\begin{equation}
    \dd \ell_\zeta^2
    =\frac{1}{\mu}\,\dd a^2
    +\frac{1}{4\beta\mu k^3}\,\dd k^2 .
    \label{eq:harmonic_friction_metric}
\end{equation}
On every compact control window
\begin{equation}
    \M_{\mathrm{trap}}=[a_-,a_+]\times[k_-,k_+],
    \qquad 0<k_-<k_+<\infty,
    \label{eq:harmonic_control_window}
\end{equation}
this tensor is a smooth positive-definite Riemannian metric.  Taking the coverage metric to be \(g=\zeta\), Theorem~\ref{thm:length_lower_bound} applies directly and no auxiliary comparison between a diagnostic geometry and a friction tensor is needed.  For any intrinsically \(\eps\)-dense traversal of the two-dimensional trap-control window,
\begin{equation}
    \Length_\zeta[H_\eps]\ge C_{\rm trap}\eps^{-1}-O(\eps),
    \qquad
    W_{\rm ex}^{(2)}[H_\eps]\ge \frac{\Length_\zeta[H_\eps]^2}{\tau},
    \label{eq:harmonic_space_filling_bound}
\end{equation}
with \(C_{\rm trap}>0\) depending on the chosen compact window.  This example illustrates how the geometric covering bound becomes a microscopic stochastic-thermodynamic bound when the metric is a transport-derived friction tensor rather than a diagnostic Riemannian metric.

\subsection{Microscopic friction from a detailed-balance Markov jump process}
\label{subsec:markov_jump_friction}

The harmonic trap above is a continuous Langevin example.  A finite-state master equation gives an equally explicit microscopic realization and shows that the friction metric is not tied to diffusive dynamics.  Consider a continuous-time Markov chain on three states $n=0,1,2$ with controllable energies
\begin{equation}
    U_0=0,\qquad U_1=\lambda^1,\qquad U_2=\lambda^2,
    \label{eq:three_state_energies}
\end{equation}
and equilibrium probabilities
\begin{equation}
    \pi_n(\lambda)=\frac{e^{-\beta U_n(\lambda)}}{Z(\lambda)},
    \qquad
    Z=1+e^{-\beta\lambda^1}+e^{-\beta\lambda^2} .
    \label{eq:three_state_pi}
\end{equation}
At fixed $\lambda$, choose the heat-bath jump rates
\begin{equation}
    w_{n\to m}(\lambda)=\gamma\,\pi_m(\lambda),
    \qquad m\ne n,
    \label{eq:heat_bath_rates}
\end{equation}
with $\gamma>0$.  These rates satisfy detailed balance, $\pi_n w_{n\to m}=\pi_m w_{m\to n}$, and make $\pi$ the unique stationary distribution.  For any observable $f(n)$, the backward generator acts as
\begin{equation}
    (\mathcal L_\lambda f)(n)
    =\gamma\left(\langle f\rangle_\pi-f(n)\right).
    \label{eq:heat_bath_generator}
\end{equation}
Therefore every zero-mean observable relaxes with the single rate $\gamma$:
\begin{equation}
    \left\langle f(t)g(0)\right\rangle_\lambda
    =e^{-\gamma t}\left\langle f g\right\rangle_\lambda
    \quad
    \text{when }\langle f\rangle_\pi=0 .
    \label{eq:heat_bath_correlation_decay}
\end{equation}

The generalized forces conjugate to the energy controls are
\begin{equation}
    X_i=-\partial_{\lambda^i}U=-\mathbf{1}_{n=i},
    \qquad i=1,2 .
    \label{eq:three_state_generalized_forces}
\end{equation}
Using the same correlation formula as Eq.~\eqref{eq:force_correlation_friction}, Eq.~\eqref{eq:heat_bath_correlation_decay} gives
\begin{align}
    \zeta_{ij}(\lambda)
    &=\beta\int_0^\infty
      \left\langle \delta X_i(t)\delta X_j(0)\right\rangle_\lambda\,\dd t \\
    &=\frac{\beta}{\gamma}
      \left(\pi_i\delta_{ij}-\pi_i\pi_j\right),
    \qquad i,j=1,2 .
    \label{eq:three_state_friction_tensor}
\end{align}
Equivalently,
\begin{align}
    \dd\ell_\zeta^2
    =\frac{\beta}{\gamma}\bigl[&\pi_1(1-\pi_1)(\dd\lambda^1)^2
    -2\pi_1\pi_2\,\dd\lambda^1\dd\lambda^2 \\
    &+\pi_2(1-\pi_2)(\dd\lambda^2)^2\bigr].
    \label{eq:three_state_metric}
\end{align}
The determinant is
\begin{equation}
    \det \zeta
    =\left(\frac{\beta}{\gamma}\right)^2\pi_0\pi_1\pi_2,
    \label{eq:three_state_metric_determinant}
\end{equation}
so the tensor is positive definite whenever all three equilibrium probabilities are nonzero.  On any compact two-control window $\lambda^i\in[\lambda_-^i,\lambda_+^i]$, the probabilities are bounded away from zero and Eq.~\eqref{eq:three_state_friction_tensor} is a smooth Riemannian friction metric.  Taking $g=\zeta$, Theorem~\ref{thm:length_lower_bound} and Eq.~\eqref{eq:cs_dissipation} therefore imply
\begin{equation}
    \Length_\zeta[H_\eps]\ge C_{\rm jump}\eps^{-1}-O(\eps),
    \qquad
    W_{\rm ex}^{(2)}[H_\eps]\ge\frac{\Length_\zeta[H_\eps]^2}{\tau},
    \label{eq:three_state_space_filling_bound}
\end{equation}
for any intrinsically $\eps$-dense traversal of the two-dimensional energy-control window.  Since the metric components are uniformly bounded above and below on such a window, any local raster with mesh spacing $\delta$ has $\Length_\zeta=\Theta(\delta^{-1})$ and a quadratic-action fixed-duration lower bound $W_{\rm ex}^{(2)}=\Omega(\delta^{-2}/\tau)$.  This example supplies a fully microscopic master-equation counterpart to the Langevin trap: the friction tensor follows from the specified transition rates and not from an imposed diagnostic geometry.

\subsection{Quantitative operational sweep cost}
\label{subsec:operational_sweep_cost}

The harmonic trap provides, in addition, a concrete operational primitive for exhaustive control-space sweeps.  Consider a serpentine raster of the window \eqref{eq:harmonic_control_window}, with rows parallel to the trap-center coordinate \(a\) and coordinate row spacing \(\delta_k\) in stiffness.  Let
\begin{equation}
    A=a_+-a_-,
    \qquad
    K=k_+-k_- .
    \label{eq:trap_window_lengths}
\end{equation}
At each fixed \(k\), one full row has friction length \(\int_{a_-}^{a_+}\sqrt{\zeta_{aa}}\,\dd a=A/\sqrt{\mu}\).  The number of rows is \(K/\delta_k+O(1)\), so the horizontal contribution is
\begin{equation}
    \Length_{a,\mathrm{raster}}
    =\frac{A K}{\sqrt{\mu}\,\delta_k}+O(1).
    \label{eq:raster_horizontal_length}
\end{equation}
The vertical connectors between rows contribute only a bounded term,
\begin{align}
    \Length_{k,\mathrm{raster}}
    &=\int_{k_-}^{k_+}\sqrt{\zeta_{kk}}\,\dd k+O(\delta_k) \\
    &=\frac{1}{\sqrt{\beta\mu}}
      \left(k_-^{-1/2}-k_+^{-1/2}\right)+O(\delta_k).
    \label{eq:raster_vertical_length}
\end{align}
Therefore
\begin{equation}
    \Length_{\zeta,\mathrm{raster}}
    =\frac{A K}{\sqrt{\mu}\,\delta_k}+O(1),
    \label{eq:raster_total_length}
\end{equation}
and, within the same quadratic slow-driving description, any realization of this raster in duration \(\tau\) satisfies
\begin{equation}
    W_{\rm ex}^{(2)}
    \ge
    \frac{1}{\tau}
    \left(
    \frac{A K}{\sqrt{\mu}\,\delta_k}+O(1)
    \right)^2
    =\Omega\!\left(\frac{A^2K^2}{\mu\tau\,\delta_k^2}\right).
    \label{eq:raster_work_lower_bound}
\end{equation}
Thus halving the stiffness-row spacing requires, at fixed duration in the quadratic action, at least a fourfold increase in the leading excess-work budget.  Equivalently, a protocol constrained to remain slow and to use a fixed work budget must increase its duration with the same power.  This is the same codimension-one space-filling cost as Theorem~\ref{thm:length_lower_bound}, expressed in a laboratory-style sweep variable rather than in an abstract intrinsic covering radius.  The example is deliberately simple and noncritical; its role is to show that exhaustive traversal is a practical finite-time thermodynamic primitive whenever a protocol must resolve a whole two-control window rather than connect two endpoints.

\subsection{Operational resolution cutoffs and discrete-state traversal}
\label{subsec:operational_resolution_cutoff}

The divergence in Theorem~\ref{thm:length_lower_bound} is a continuum statement.  It should not be read as an infinite cost at any fixed laboratory or numerical resolution.  To describe finite observations, let \(\mathcal P_\Delta\) be a finite partition of the state-space window into experimentally or computationally indistinguishable cells and define its intrinsic mesh diameter by
\begin{equation}
    \Delta_g=\sup_{C\in\mathcal P_\Delta}\diam_g(C).
    \label{eq:intrinsic_resolution_floor}
\end{equation}
If two points lie in the same cell, a protocol that distinguishes them is below the operational resolution of the apparatus or simulation.  The meaningful covering radius is therefore not \(\eps\) alone but
\begin{equation}
    r_{\rm op}=\max\{\eps,\Delta_g\}.
    \label{eq:effective_operational_radius}
\end{equation}
On a regular compact \(d\)-dimensional window, the same lower-bound and Hilbert/Peano upper-bound arguments give the cutoff-controlled scaling
\begin{equation}
    \Length_{\rm op}^{*}(\eps;\Delta_g)
    =\Theta\!\left(r_{\rm op}^{1-d}\right)
    =\Theta\!\left(\max\{\eps,\Delta_g\}^{1-d}\right),
    \label{eq:operational_length_scaling}
\end{equation}
where \(\Length_{\rm op}^{*}\) denotes the minimum leading-order length among local finite-resolution traversals, up to geometry-dependent constants.  Thus finite resolution regularizes the singular limit but does not remove the scaling law: for fixed \(\Delta_g>0\), \(\eps\downarrow0\) gives a finite plateau of order \(\Delta_g^{1-d}\), while grid or apparatus refinement restores the continuum divergence as \(\Delta_g\downarrow0\).

The same cutoff enters finite-time thermodynamics.  If the friction length has the same leading scaling with constant \(C_\zeta\), the quadratic fixed-duration bound obeys
\begin{equation}
    W_{{\rm ex},{\rm op}}^{(2)}
    \gtrsim
    \frac{C_\zeta^2}{\tau}\,
    \max\{\eps,\Delta_g\}^{-2(d-1)}.
    \label{eq:operational_fixed_time_work}
\end{equation}
Conversely, for a prescribed quadratic excess-work budget \(W_0\) and duration \(\tau\), the achievable operational resolution satisfies the scaling bound
\begin{equation}
    \eps_{\rm ach}
    \gtrsim
    \max\!\left\{
    \Delta_g,
    \left(\frac{C_\zeta^2}{W_0\tau}\right)^{1/[2(d-1)]}
    \right\},
    \label{eq:achievable_resolution_bound}
\end{equation}
with constants depending on the chosen metric convention and window.  This formula separates two regimes: a measurement-limited regime controlled by \(\Delta_g\), and a thermodynamic-budget-limited regime controlled by \(W_0\tau\).

A complementary discrete-state view is obtained by replacing the continuum window by a weighted graph
\begin{equation}
    G_{\Delta}=(V_{\Delta},E_{\Delta}),
    \label{eq:weighted_graph_def}
\end{equation}
whose vertices are cells or cell centers and whose local edge weights are
\begin{equation}
    w_{ij}\simeq
    \left[\Delta\lambda_{ij}^{a}\,
    \zeta_{ab}(\lambda_{ij}^{\rm mid})
    \Delta\lambda_{ij}^{b}\right]^{1/2}.
    \label{eq:graph_edge_weight}
\end{equation}
For a quasi-uniform \(d\)-dimensional grid, \(|V_\Delta|=\Theta(\Delta_g^{-d})\) and local edge weights scale as \(\Theta(\Delta_g)\).  A locality-preserving exhaustive path therefore has length \(\Theta(|V_\Delta|\Delta_g)=\Theta(\Delta_g^{1-d})\), matching Eq.~\eqref{eq:operational_length_scaling}.  The finite graph formulation is not used to solve a traveling-salesman optimization problem here; rather, it clarifies how continuous thermodynamic length bounds become resource laws for discrete simulations.

The allocation of time over this graph determines how the cost is observed.  If the total duration \(\tau\) is fixed and the protocol is parametrized at constant thermodynamic speed, Eq.~\eqref{eq:operational_fixed_time_work} gives \(W_{\rm ex}^{(2)}\sim \Delta_g^{-2(d-1)}/\tau\).  If instead each local segment is assigned a fixed dwell time \(t_{\rm dwell}\), then
\begin{equation}
    \tau_{\rm dwell}\sim |V_\Delta|t_{\rm dwell}
    =\Theta(\Delta_g^{-d}t_{\rm dwell}),
    \label{eq:fixed_dwell_total_time}
\end{equation}
whereas the discrete quadratic work estimate scales as
\begin{equation}
    W_{\rm dwell}
    \sim
    \sum_{(i,j)}\frac{w_{ij}^{2}}{t_{\rm dwell}}
    =\Theta\!\left(\frac{\Delta_g^{2-d}}{t_{\rm dwell}}\right).
    \label{eq:fixed_dwell_work}
\end{equation}
For \(d=2\), the fixed-dwell-time work is therefore asymptotically bounded while the total acquisition time diverges as \(\Delta_g^{-2}\).  The continuum divergence has not disappeared; it has moved from fixed-time quadratic dissipation to sample count and total scan duration.

\subsection{Order of limits for cutoff-dependent windows}
\label{subsec:order_of_limits}

When a critical point, spinodal, or other singular set is excluded by a cutoff \(\Delta>0\), the regular theorem is always a fixed-window statement.  Write the cutoff window as \(\M_\Delta\) and the corresponding smooth positive metric as \(g_\Delta\).  For every fixed \(\Delta\), Theorem~\ref{thm:length_lower_bound} gives the \(\eps\downarrow0\) lower bound on \(\M_\Delta\), and Hilbert/Peano-type local traversals give the matching exponent.  In this sense one writes, for a specified traversal family,
\begin{equation}
    L_\eps(\Delta)\sim C_\Delta\,\eps^{1-d},
    \qquad \eps\downarrow0\quad\text{at fixed }\Delta .
    \label{eq:fixed_delta_resolution_limit}
\end{equation}
The critical-window problem is the subsequent behavior of the fixed-window amplitude \(C_\Delta\) as \(\Delta\downarrow0\).  Equivalently, when the prefactor limit exists for the chosen traversal,
\begin{equation}
    C_\Delta=\lim_{\eps\downarrow0}\eps^{d-1}L_\eps(\Delta),
    \qquad \text{then study }\quad \Delta\downarrow0 .
    \label{eq:sequential_limits}
\end{equation}
The analytic order of limits in the critical-prefactor statements is therefore \(\lim_{\Delta\downarrow0}\lim_{\eps\downarrow0}\), not an uncontrolled simultaneous limit.  The finite numerical data necessarily use finite grid orders and finite cutoffs; the reported exponents and highest-order prefactors should be read as finite-window approximations to Eq.~\eqref{eq:fixed_delta_resolution_limit}.  If the grid scale is not asymptotically smaller than the critical-core width, the fitted slopes are effective diagnostics rather than precision asymptotic exponents.

\subsection{Model systems and explicit metrics}
\label{subsec:model_systems}

This subsection specifies the three model classes used in the numerical analysis.  The purpose here is not yet to analyze critical scaling, but to provide explicit state spaces, metrics, and regularity assumptions for the finite-resolution length bound.

\subsubsection{Ideal gas}
\label{subsubsec:ideal_gas}

For a single-component ideal gas at fixed particle number, use molar or per-particle variables \(x=(u,v)\), where \(u>0\) is internal energy per particle and \(v>0\) is volume per particle.  In dimensionless units, the entropy per particle may be written
\begin{equation}
    \sigma(u,v)=\frac{s(u,v)}{k_B}
    =\sigma_0+c\ln u+\ln v,
    \label{eq:ideal_entropy}
\end{equation}
where \(c=C_V/(Nk_B)>0\) is the dimensionless heat capacity per particle at constant volume.  The Ruppeiner metric is
\begin{equation}
    g^{\rm id}_{ij}=-\partial_i\partial_j \sigma
    =
    \begin{pmatrix}
        c/u^2 & 0 \\
        0 & 1/v^2
    \end{pmatrix}.
    \label{eq:ideal_metric}
\end{equation}
With the coordinate transformation
\begin{equation}
    y^1=\sqrt{c}\ln u,
    \qquad
    y^2=\ln v,
    \label{eq:ideal_flat_coords}
\end{equation}
Eq.~\eqref{eq:ideal_metric} becomes Euclidean: \(\dd\ell^2=(\dd y^1)^2+(\dd y^2)^2\).  Thus the ideal-gas state space provides a flat benchmark in which the lower-bound scaling can be checked without curvature effects.

For a compact window
\begin{equation}
    \M_{\rm id}=[u_-,u_+]\times[v_-,v_+],
    \qquad 0<u_-<u_+,\quad 0<v_-<v_+,
    \label{eq:ideal_window}
\end{equation}
its thermodynamic area is
\begin{equation}
    \Vol_{g^{\rm id}}(\M_{\rm id})
    =\sqrt{c}\,\ln\frac{u_+}{u_-}\,\ln\frac{v_+}{v_-}.
    \label{eq:ideal_volume}
\end{equation}
A Hilbert traversal uniform in the flat coordinates \((y^1,y^2)\), rather than in \((u,v)\), is therefore uniform with respect to the Ruppeiner volume element.  For \(d=2\), the lower bound reduces to \(\Length_{g^{\rm id}}[H_\eps]\ge C\eps^{-1}-O(\eps)\), with \(C\) proportional to the thermodynamic area of the chosen window.

\subsubsection{van der Waals fluid}
\label{subsubsec:vdw}

A van der Waals fluid supplies a curved state-space example and introduces singular sets associated with mechanical instability.  In dimensionless per-particle variables, let
\begin{equation}
    q(u,v)=u+\frac{a}{v},
    \qquad
    r(v)=v-b,
    \label{eq:vdw_qr}
\end{equation}
with \(a>0\), \(b>0\), \(q>0\), and \(r>0\).  A convenient entropy representation is
\begin{equation}
    \sigma(u,v)=\sigma_0+c\ln q(u,v)+\ln r(v).
    \label{eq:vdw_entropy}
\end{equation}
The Ruppeiner metric \(g^{\rm vdW}_{ij}=-\partial_i\partial_j\sigma\) has components
\begin{align}
    g^{\rm vdW}_{uu}
    &=\frac{c}{q^2},
    \label{eq:vdw_metric_uu}\\
    g^{\rm vdW}_{uv}
    &=-\frac{ca}{v^2q^2},
    \label{eq:vdw_metric_uv}\\
    g^{\rm vdW}_{vv}
    &=\frac{1}{r^2}-\frac{2ca}{v^3q}+\frac{ca^2}{v^4q^2}.
    \label{eq:vdw_metric_vv}
\end{align}
The admissible regular state space is chosen as a compact subset
\begin{equation}
    \M_{\rm vdW}\Subset\{(u,v):q(u,v)>0,\ v>b,\ g^{\rm vdW}(u,v)>0\},
    \label{eq:vdw_regular_domain}
\end{equation}
where \(g^{\rm vdW}>0\) denotes positive definiteness of the Hessian metric.  Spinodal or critical loci appear where the stability matrix loses positive definiteness.  They are excluded in the regular theorem and approached below through cutoff-dependent critical windows.

For numerical work, the length of a finite Hilbert traversal in \((u,v)\) coordinates is evaluated by Eq.~\eqref{eq:discrete_length} with the metric components \eqref{eq:vdw_metric_uu}--\eqref{eq:vdw_metric_vv}.  For geometric sampling, a more intrinsic alternative is to construct the traversal in coordinates approximately uniform with respect to the Ruppeiner volume element
\begin{equation}
    \dd\mu_{g^{\rm vdW}}
    =\sqrt{g^{\rm vdW}_{uu}g^{\rm vdW}_{vv}-(g^{\rm vdW}_{uv})^2}\,\dd u\dd v.
    \label{eq:vdw_volume_element}
\end{equation}
The length lower bound applies to either construction, but the prefactor differs because \(\Vol_g(\M)\) and \(C_{\rm ball}\) depend on the metric and chosen domain.

\subsubsection{Mean-field Ising model}
\label{subsubsec:ising}

The mean-field Ising model \cite{RotskoffCrooks2015} provides a minimal setting in which a thermodynamic metric becomes large near a continuous phase transition.  For \(N\) spins \(\sigma_i=\pm1\), the Curie--Weiss Hamiltonian is
\begin{equation}
    \mathcal{H}_N(\sigma;J,h)
    =-\frac{J}{2N}\left(\sum_{i=1}^N\sigma_i\right)^2
    -h\sum_{i=1}^N\sigma_i,
    \label{eq:curie_weiss_hamiltonian}
\end{equation}
with \(J>0\).  The partition function is
\begin{equation}
    Z_N(\beta,h)
    =\sum_{\{\sigma_i\}}
    \exp\left[
    \frac{\beta J}{2N}\left(\sum_i\sigma_i\right)^2
    +\beta h\sum_i\sigma_i
    \right].
    \label{eq:curie_weiss_partition}
\end{equation}
In natural parameters \(\theta=(\beta J,\beta h)\), the finite-size Fisher metric per spin is
\begin{equation}
    g^{(N)}_{ij}(\theta)
    =\frac{1}{N}\partial_i\partial_j\ln Z_N(\theta).
    \label{eq:ising_fisher_finiteN}
\end{equation}
This metric is the covariance matrix per spin of the sufficient statistics conjugate to \(\theta^i\).  It is smooth for finite \(N\) and becomes singular only in the thermodynamic and critical limits.

For analytic calculations in the thermodynamic limit, the pressure per spin can be written as the variational expression
\begin{equation}
    \psi(\theta)
    =\sup_{m\in[-1,1]}
    \left[\frac{\theta^1}{2}m^2+\theta^2m+s_{\rm mix}(m)\right],
    \label{eq:ising_pressure_variational}
\end{equation}
where
\begin{equation}
    s_{\rm mix}(m)
    =-\frac{1+m}{2}\ln\frac{1+m}{2}
     -\frac{1-m}{2}\ln\frac{1-m}{2}.
    \label{eq:mixing_entropy}
\end{equation}
The maximizing magnetization satisfies
\begin{equation}
    m=\tanh(\beta J m+\beta h).
    \label{eq:mean_field_equation}
\end{equation}
The regular compact domain for the present analysis is chosen away from the critical point and coexistence singularities, for example
\begin{equation}
    \M_{\rm Ising}(\Delta)
    =\{(T,h):T\ge T_c+\Delta,\ |h|\le h_0\},
    \qquad \Delta>0,
    \label{eq:ising_regular_domain}
\end{equation}
or another compact single-phase domain on which the chosen metric is smooth and positive definite.  On such a domain, Theorem~\ref{thm:length_lower_bound} applies directly.  The analysis below examines the singular limit \(\Delta\to0\), where susceptibilities and the metric volume element acquire nontrivial scaling.

Equivalently, one may use the Landau mean-field potential
\begin{equation}
    \Phi(T,m;h)
    =\Phi_0(T)+\frac{a}{2}(T-T_c)m^2+\frac{b}{4}m^4-hm,
    \qquad a,b>0,
    \label{eq:landau_potential}
\end{equation}
with equilibrium condition
\begin{equation}
    h=a(T-T_c)m+bm^3.
    \label{eq:landau_equation_state}
\end{equation}
The magnetic susceptibility in this approximation is
\begin{equation}
    \chi_T=\left(\frac{\partial m}{\partial h}\right)_T
    =\frac{1}{a(T-T_c)+3bm^2}.
    \label{eq:landau_susceptibility}
\end{equation}
A diagonalized local fluctuation metric in the \((T,h)\) control space contains a magnetic component proportional to \(\beta\chi_T\); hence the metric remains regular on \(\M_{\rm Ising}(\Delta)\) but becomes singular as \(T\downarrow T_c\) and \(h\to0\).  This makes the mean-field Ising model the natural first testbed for determining how the regular scaling \(\eps^{1-d}\) is modified, or supplemented by additional prefactors, when the excluded critical region is restored.

\subsubsection{A mean-field Model-A friction ansatz}
\label{subsubsec:ising_friction}

The model classes above supply coverage metrics, but the dissipation bounds of Sec.~\ref{subsec:dissipation} require a friction tensor \(\zeta\) that is uniformly comparable to the coverage metric on the working window.  For the mean-field Ising example we therefore use a scaling-motivated Model-A friction ansatz.  In slow-driving linear response, the friction tensor is the integrated equilibrium autocorrelation of the conjugate forces \cite{SivakCrooks2012,Zulkowski2012}; under a single-dominant-slow-mode approximation this has the scaling form
\begin{equation}
    \zeta_{ij}(\lambda)\;\simeq\;\beta\,\tau_{\rm relax}(\lambda)\,\mathrm{Cov}_{ij}(\lambda),
    \label{eq:friction_factorization}
\end{equation}
namely an equilibrium (Fisher) covariance multiplied by an integral relaxation time.  This reduction is an ansatz for the present mean-field illustration, not a complete microscopic derivation of the full friction tensor: a fully specified finite-time model would have to state the stochastic dynamics, finite-size limit, off-diagonal terms, and the complete relaxation spectrum.

Near the mean-field critical point the slow mode is the magnetization.  For Model-A mean-field scaling, its relaxation time obeys \(\tau_{\rm relax}\sim\chi_T^{\,z\nu}\) with \(z\nu=1\), while the conjugate-field fluctuation is the susceptibility, \(\mathrm{Cov}_{hh}=\chi_T\).  Within this ansatz, the field-conjugate friction component scales as
\begin{equation}
    \zeta_{hh}\;\simeq\;\beta\,\tau_{\rm relax}\,\chi_T
    \;\sim\;\beta\,\chi_T^{\,1+z\nu},
    \label{eq:friction_hh}
\end{equation}
which motivates the friction-like control-metric component \(g_{hh}=\beta\chi^{1+z\nu}\) used in Table~\ref{tab:appendix_model_windows}.  The temperature-like component is taken to carry only a bounded relaxation factor on the supercritical window, so we use \(\zeta_{tt}\propto C/T^2\).  Thus the ``friction-like'' metric should be read as a Model-A scaling model for the linear-response friction, not as a universal microscopic tensor.

On each fixed window \(\M_{\rm Ising}(\Delta)\), the componentwise ratios \(\zeta_{ii}/g_{ii}\) relative to the static Fisher coverage metric \(g_{hh}=\beta\chi\) are bounded between positive constants \(m_{g\zeta}\) and \(M_{g\zeta}\); numerically \(m_{g\zeta}\simeq1\) while \(M_{g\zeta}\sim\Delta^{-z\nu}\) grows as the excluded critical point is approached, reflecting critical slowing down (Fig.~\ref{fig:friction_comparability}).  Both constants are finite and positive on every fixed regular window, which is the hypothesis \(\zeta(v,v)\ge m_{g\zeta}\,g(v,v)\) of Eq.~\eqref{eq:zeta_g_comparison} under which Eqs.~\eqref{eq:dissipation_lower_bound}--\eqref{eq:time_budget} apply with the modeled \(\zeta\).  Their cutoff dependence is the friction-model realization of the cutoff-dependent amplitude analyzed in Sec.~\ref{subsec:critical_integrability_results}: comparability holds on each regular window, while the comparison constant degrades as the window is pushed toward the transition.

\subsection{Operational summary of the finite-resolution protocol}
\label{subsec:operational_summary}

The preceding definitions lead to a concrete protocol for both analytic estimates and numerical experiments:
\begin{enumerate}
    \item Choose a compact regular thermodynamic window \(\M\) and a thermodynamic metric \(g\) for coverage.
    \item Choose a friction metric \(\zeta\), or set \(\zeta=g\) if the same metric is used for both coverage and dissipation.  The Markov-jump and harmonic-trap examples in Secs.~\ref{subsec:markov_jump_friction} and~\ref{subsec:harmonic_trap_friction} are microscopic cases in which this choice is direct.
    \item Construct an \(\eps\)-dense Hilbert/Peano-type grid traversal \(H_\eps\) in coordinates adapted either to the experimental controls or to the Riemannian volume element \(\dd\mu_g\).
    \item Compute or estimate \(\Length_g[H_\eps]\) and \(\Length_\zeta[H_\eps]\) using Eq.~\eqref{eq:discrete_length} or a geodesic quadrature rule.
    \item Compare the observed scaling with the universal lower bound \(\Length_g[H_\eps]\ge C_g\eps^{1-d}-O(\eps)\), the quadratic finite-time dissipation bound \(W_{\rm ex}^{(2)}\ge \Length_\zeta^2/\tau\), and, when a finite observation floor is present, the operational cutoff law \(\Length_{\rm op}=\Theta(\max\{\eps,\Delta_g\}^{1-d})\).
\end{enumerate}
For regular compact state spaces, the expected leading exponent is fixed by dimension.  Deviations from this behavior are therefore diagnostic of finite-resolution effects, metric singularities, boundary effects, or a nonuniform sampling prescription.  The analysis below uses this regular result as the baseline against which critical-window scaling is measured.

\section{Analysis}
\label{sec:analysis}

\subsection{Numerical protocol and estimators}
\label{subsec:numerical_protocol}

We now test the finite-resolution prediction in two-dimensional state-space windows.  The baseline numerical curves are Hilbert center-to-center traversals of rectangular coordinate windows.  An order-\(n\) traversal contains \(2^n\times 2^n\) cells and has coordinate covering scale \(\eps\propto 2^{-n}\).  The reported \(\eps\) is this coordinate covering radius.  More precisely, if \(\eps_g\) denotes the intrinsic covering radius, Eq.~\eqref{eq:coord_intrinsic_distance} implies
\begin{equation}
    a_\Delta \eps \le \eps_g \le b_\Delta \eps
    \label{eq:numerical_epsilon_comparison}
\end{equation}
for each fixed regular numerical window, with positive constants determined by the metric and the chosen coordinates.  Hence fitting in terms of the coordinate scale \(\eps\) gives the same exponent as fitting in terms of \(\eps_g\).  The reported prefactors \(C_\Delta\) are coordinate-scale prefactors; intrinsic prefactors differ by bounded metric-dependent factors, and any cutoff dependence of those factors is part of the amplitude analysis rather than a change of the resolution exponent.

For an ordered list of grid centers \(x_0,x_1,\ldots,x_N\), the thermodynamic length was evaluated by midpoint quadrature,
\begin{equation}
    L_\eps^{\mathcal T}
    =\sum_{n=0}^{N-1}
    \left[
    \Delta x_n^i\,
    g_{ij}\!\left(\frac{x_n+x_{n+1}}{2}\right)
    \Delta x_n^j
    \right]^{1/2},
    \label{eq:analysis_discrete_length}
\end{equation}
where \(\Delta x_n=x_{n+1}-x_n\) and \(\mathcal T\) denotes the traversal.  For Hilbert traversal, \(\mathcal T=\mathrm{H}\), consecutive cells share an edge.  For Morton/Z-order traversal, \(\mathcal T=\mathrm{M}\), the same cells are ordered by the key in Eq.~\eqref{eq:morton_key}; consecutive cells may therefore be separated by nonlocal jumps.  Equation~\eqref{eq:analysis_discrete_length} uses the full quadratic form for both traversals, so diagonal and long Morton jumps are not reduced to axis-aligned steps.  For Morton this defines the straight coordinate-segment transition-cost estimator used in the locality-control comparison.  A laboratory protocol that realizes the same ordering with different connectors would have to recompute the corresponding length using those connectors.  Component decompositions for Hilbert are obtained from the axis-aligned segments.  For Morton they are used only as projected diagnostic contributions, while the total length is always the full metric length in Eq.~\eqref{eq:analysis_discrete_length}.

For each model, traversal, and, when applicable, critical cutoff \(\Delta\), the resolution dependence was fitted as
\begin{equation}
    L_\eps^{\mathcal T}(\Delta)=C_{\Delta,\mathcal T}^{\rm fit}\,\eps^{-p_{\mathcal T}}.
    \label{eq:length_fit_analysis}
\end{equation}
The standard-window fits used the tail orders \(n=5,6,7,8\).  In accordance with Sec.~\ref{subsec:order_of_limits}, these fits approximate the fixed-\(\Delta\) resolution limit first; the cutoff dependence is then inferred from the resulting amplitudes.  Critical-prefactor plots use the highest available resolution estimator
\begin{equation}
    C_{\Delta,\mathcal T}^{\rm hi}=L_{\eps_{\min}}^{\mathcal T}(\Delta)\,\eps_{\min},
    \label{eq:highest_order_prefactor}
\end{equation}
for the two-dimensional runs.  This estimator is a finite-grid approximation to the fixed-window prefactor \(C_\Delta\) in Eq.~\eqref{eq:fixed_delta_resolution_limit}; it is equivalent to the fitted prefactor at the level of the scaling conclusions when the critical core is resolved.  A segment-subdivision check was also performed for the ideal-gas Morton traversal, replacing each nonlocal jump by four equal subsegments and applying midpoint quadrature to each subsegment.  This check tests whether the Morton/Hilbert amplitude difference is a quadrature artifact in a regular benchmark.

A source-code-independent specification of the numerical protocol is given in Appendix~\ref{app:numerical_details}.  Briefly, all runs are deterministic center-to-center grid traversals with no random sampling.  The baseline Hilbert data use orders $n=4,5,6,7,8$ and the tail fit $n=5,6,7,8$.  The $q=3$ and $q=4$ van der Waals response-proxy critical-prefactor runs use orders $n=6,7,8,9$, with an additional small-$\Delta$ $q=4$ check using $n=8,9,10$.  The Morton/Z-order comparison uses the same baseline orders and, for the nonintegrable $q=4$ stress test, orders $n=6,7,8,9$.  The fits are ordinary least-squares fits in log--log coordinates; quoted uncertainties are standard errors of the fitted slope, and residuals are evaluated in log space as described in Appendix~\ref{app:numerical_details}.

The numerical models are as follows.  The ideal gas uses the Ruppeiner metric in Eq.~\eqref{eq:ideal_metric}.  The Ising windows use the Landau equation of state, Eq.~\eqref{eq:landau_equation_state}, with a diagonal control metric
\begin{equation}
    g_{tt}=\frac{C}{T^2},
    \qquad
    g_{hh}=\beta\chi
\end{equation}
for the static Fisher-like case, and
\begin{equation}
    g_{hh}=\beta\chi^{1+z\nu}
    \label{eq:ising_friction_metric_analysis}
\end{equation}
for the friction-like case, where \(z\nu=1\) in the baseline run.  The van der Waals critical windows are expressed in \((T,v)\) coordinates.  Defining
\begin{equation}
    s(T,v)=\frac{1}{(v-b)^2}-\frac{2a}{T v^3},
    \label{eq:vdw_stability_factor}
\end{equation}
with \(T_c=8a/(27b)\) and \(v_c=3b\), the Ruppeiner-pullback benchmark uses
\begin{equation}
    g_{TT}=\frac{c_v}{T^2},\qquad g_{vv}=s(T,v),
    \label{eq:vdw_ruppeiner_pullback_analysis}
\end{equation}
whereas the response-proxy family uses
\begin{equation}
    g_{TT}=\frac{c_v}{T^2},\qquad g_{vv}=s(T,v)^{-q}.
    \label{eq:vdw_response_proxy_analysis}
\end{equation}
The family in Eq.~(\ref{eq:vdw_response_proxy_analysis}) is a diagnostic response proxy, not a microscopic finite-time friction tensor.  It fixes the regular \(T\)-direction component and tunes only the singular \(v\)-direction length density through \(q\).  Thus \(q\) is not a dynamic critical exponent, transport coefficient, or universal property of the van der Waals fluid.  Its purpose is to vary the integrability of \(\sqrt{g_{vv}}\) on a cutoff window in a controlled Riemannian model.  If a microscopic friction tensor \(\zeta\), for example from generalized-force time correlations, is uniformly comparable to this proxy on the same window, then the dissipation bounds of Sec.~\ref{subsec:dissipation} apply with \(g\) replaced by \(\zeta\).  Otherwise Eq.~(\ref{eq:vdw_response_proxy_analysis}) is only a diagnostic metric for isolating how a singular directional length density changes the space-filling prefactor.

The baseline proxy has \(q=1\).  The \(q=3\) and \(q=4\) runs are controlled geometric stress tests for critical integrability; no universal dynamic or dissipative exponent is inferred from them.  Hilbert/Morton comparisons were performed for the baseline models and the nonintegrable \(q=4\) stress test.

\subsection{Operational resolution cutoff and resource allocation}
\label{subsec:operational_resolution_results}

We first test the operational cutoff law in the harmonic-trap metric of Sec.~\ref{subsec:harmonic_trap_friction}, because in this case the metric is a microscopic friction tensor rather than a diagnostic proxy.  The numerical window is \(a\in[0,1]\), \(k\in[1,4]\), with \(\beta=\mu=1\).  A serpentine raster with \(2^n\times2^n\) cells, \(n=3,\ldots,9\), is evaluated by the same midpoint length estimator as Eq.~\eqref{eq:analysis_discrete_length}.  The coordinate covering scale is the half diagonal of a grid cell.  For the fixed-total-time protocol we set \(\tau=1\) and use the quadratic slow-driving estimator
\begin{equation}
    W_{\rm fixed\;time}^{(2)}=\frac{L_\zeta^2}{\tau}.
    \label{eq:analysis_fixed_time_work}
\end{equation}
For the fixed-dwell-time protocol each segment is assigned \(t_{\rm dwell}=1\), so
\begin{equation}
    W_{\rm dwell}=\sum_i \frac{\ell_i^2}{t_{\rm dwell}},
    \qquad
    \tau_{\rm dwell}=N_{\rm seg}t_{\rm dwell},
    \label{eq:analysis_fixed_dwell_estimators}
\end{equation}
where \(\ell_i\) is the friction length of the \(i\)th segment.

Figure~\ref{fig:operational_resolution} summarizes the result.  Panel (a) shows the finite-resolution cutoff: for a fixed observational floor \(\Delta_g\), the length follows the continuum \(\eps^{-1}\) law only until \(\eps\) reaches \(\Delta_g\), after which it saturates at the operational plateau \(L_{\rm op}\sim C\Delta_g^{-1}\).  Panel (b) compares the two time-allocation protocols.  The fitted exponents are
\begin{equation}
    L_\zeta\sim \Delta_g^{-1.005},
    \qquad
    W_{\rm fixed\;time}^{(2)}\sim \Delta_g^{-2.011},
    \label{eq:operational_fit_exponents_main}
\end{equation}
while
\begin{equation}
    \tau_{\rm dwell}\sim \Delta_g^{-2.000},
    \qquad
    W_{\rm dwell}\sim \Delta_g^{-0.007}.
    \label{eq:operational_dwell_fit_exponents_main}
\end{equation}
Thus, in this two-dimensional microscopic friction example, a fixed-total-time scan converts resolution refinement into divergent excess work, whereas a fixed-dwell-time scan keeps the discrete quadratic work approximately constant and transfers the cost to total acquisition time.  Panel (c) shows the corresponding measurement-limited and thermodynamic-budget-limited regimes implied by Eq.~\eqref{eq:achievable_resolution_bound}.

\begin{figure*}[t]
    \centering
    \includegraphics[width=0.32\textwidth]{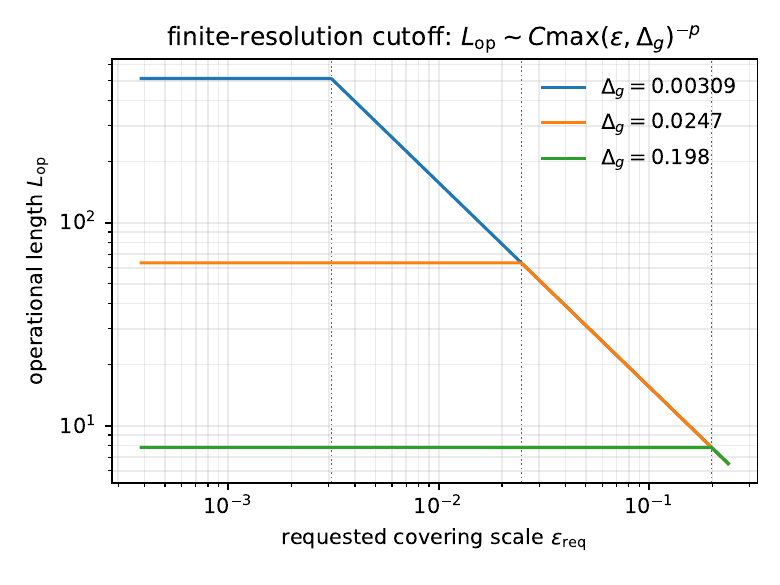}\hfill
    \includegraphics[width=0.32\textwidth]{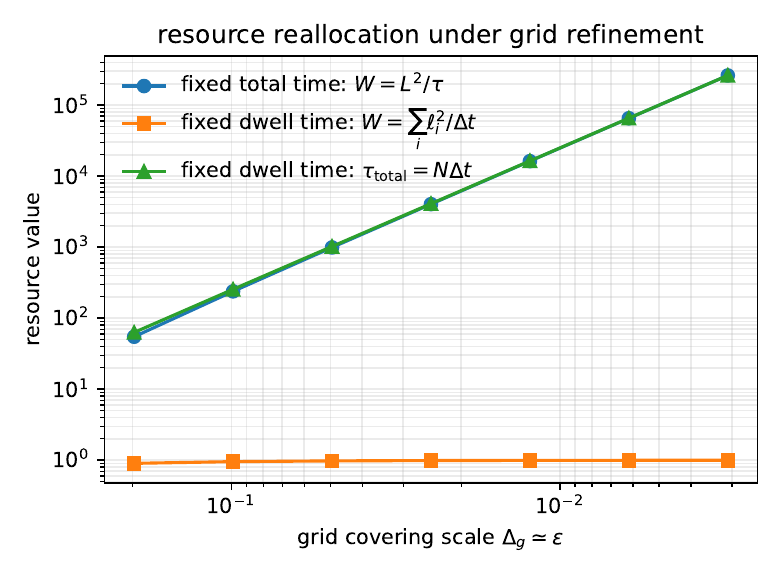}\hfill
    \includegraphics[width=0.32\textwidth]{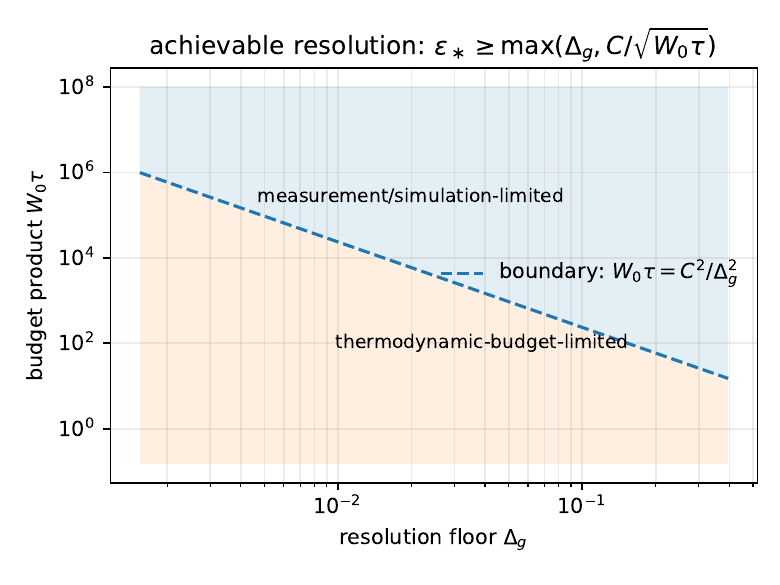}
    \caption{Operational finite-resolution resource laws in the overdamped harmonic-trap friction metric.  (a) A finite observation or simulation floor \(\Delta_g\) replaces the continuum divergence by the cutoff form \(L_{\rm op}\sim C\max\{\eps,\Delta_g\}^{-1}\).  (b) At fixed total duration, the quadratic slow-driving estimate scales as \(W_{\rm fixed\;time}^{(2)}\sim\Delta_g^{-2}\); at fixed dwell time, the discrete quadratic work is approximately constant in \(d=2\) while the total acquisition time scales as \(\tau_{\rm dwell}\sim\Delta_g^{-2}\).  (c) Resolution-resource phase diagram separating measurement-limited and thermodynamic-budget-limited regimes.}
    \label{fig:operational_resolution}
\end{figure*}

\begin{table}[t]
\caption{Operational harmonic-trap scaling exponents.  Fits use orders \(n=5,\ldots,9\) for the serpentine raster on \(a\in[0,1]\), \(k\in[1,4]\), with \(\beta=\mu=1\), \(\tau=1\), and \(t_{\rm dwell}=1\).}
\label{tab:operational_harmonic_fits}
\begin{ruledtabular}
\begin{tabular}{lc}
Observable & Fitted exponent \\
\hline
\(L_\zeta\) & 1.0054 \\
\(W_{\rm fixed\;time}^{(2)}=L_\zeta^2/\tau\) & 2.0108 \\
\(\tau_{\rm dwell}=N_{\rm seg}t_{\rm dwell}\) & 2.0003 \\
\(W_{\rm dwell}=\sum_i\ell_i^2/t_{\rm dwell}\) & 0.0070
\end{tabular}
\end{ruledtabular}
\end{table}

\subsection{Resolution scaling in regular and critical windows}
\label{subsec:resolution_scaling_results}

Figure~\ref{fig:resolution_scaling} shows the main numerical result.  After normalizing by the fitted prefactor, all representative curves collapse onto the reference \(\eps^{-1}\) law over the available orders.  The exponent extracted from Eq.~\eqref{eq:length_fit_analysis} remains close to unity for all baseline models, including the critical-window cases.  This confirms the geometric prediction of Theorem~\ref{thm:length_lower_bound} for \(d=2\): the dominant finite-resolution cost of exhaustive exploration is the codimension-one factor \(\eps^{1-d}=\eps^{-1}\).

\begin{figure*}[t]
    \centering
    \includegraphics[width=0.95\textwidth]{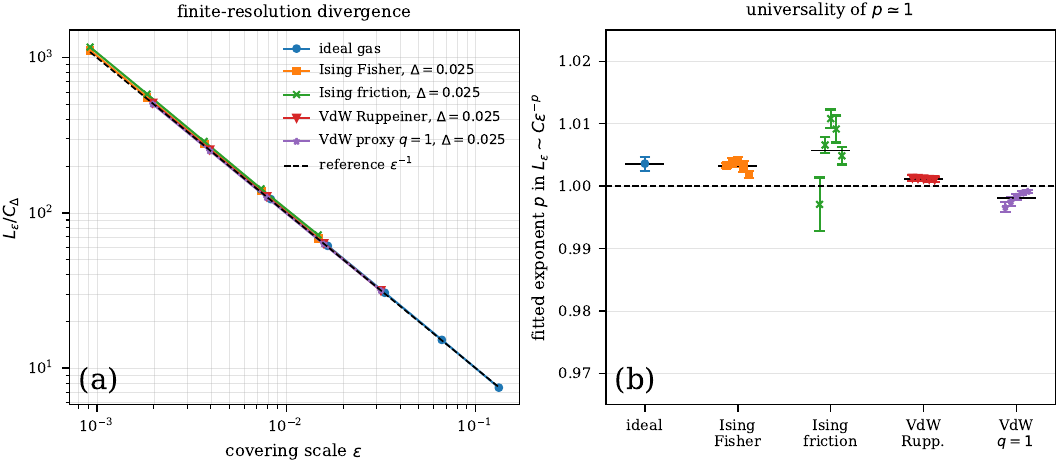}
    \caption{Finite-resolution thermodynamic-length scaling for two-dimensional state-space windows.  (a) Representative normalized lengths \(L_\eps/C_\Delta\) follow the reference \(\eps^{-1}\) law.  The critical-window examples are shown at \(\Delta=0.025\).  (b) Fitted exponents \(p\) in \(L_\eps\sim C\eps^{-p}\) for the baseline models.  Each point corresponds to a cutoff window when a critical cutoff is present; horizontal bars show the modelwise mean.}
    \label{fig:resolution_scaling}
\end{figure*}

Table~\ref{tab:baseline_exponents} summarizes the fitted exponents.  The ideal gas gives \(p=1.0036\), providing a flat regular benchmark.  The Ising Fisher-like metric gives \(p=1.0019\)--\(1.0039\), the Ising friction-like metric with \(z\nu=1\) gives \(p=0.9971\)--\(1.0108\), the van der Waals Ruppeiner pullback gives \(p=1.0011\)--\(1.0013\), and the van der Waals response proxy with \(q=1\) gives \(p=0.9966\)--\(0.9991\).  The largest deviations from unity occur in the friction-like Ising and response-proxy windows at the smallest cutoffs, where the critical core is most difficult to resolve.  These deviations are finite-resolution effects rather than changes of the leading exponent.

\begin{table*}[t]
\caption{Baseline two-dimensional finite-resolution exponents.  The dynamic range is the ratio of the largest to smallest highest-order prefactor \(C_\Delta^{\rm hi}=L_{\eps_{\min}}\eps_{\min}\) over the listed critical cutoffs.  The ideal gas has no critical cutoff.}
\label{tab:baseline_exponents}
\begin{ruledtabular}
\begin{tabular}{lccc}
Model & Critical cutoffs & Range of fitted \(p\) & Dynamic range of \(C_\Delta^{\rm hi}\) \\
\hline
Ideal gas & none & 1.0036 & -- \\
Ising Fisher & \(0.1\) to \(0.00625\) & 1.0019--1.0039 & 1.07 \\
Ising friction, \(z\nu=1\) & \(0.1\) to \(0.00625\) & 0.9971--1.0108 & 1.14 \\
VdW Ruppeiner pullback & \(0.1\) to \(0.00625\) & 1.0011--1.0013 & 1.03 \\
VdW response proxy, \(q=1\) & \(0.1\) to \(0.00625\) & 0.9966--0.9991 & 1.46
\end{tabular}
\end{ruledtabular}
\end{table*}

The table also shows that the cutoff dependence of the baseline prefactor is modest.  This is important: a local increase in a metric component near a critical point does not by itself imply a divergent global space-filling prefactor.  The prefactor is a directional integral accumulated over the window and therefore depends on whether the singular length density is integrable.  Appendix~\ref{app:standard_diagnostics} displays the standard-window cutoff diagnostics, including the Riemannian area estimates and local effective slopes.  The two smallest Ising cutoffs are flagged by the critical-core resolution diagnostic; they are therefore useful for visualizing finite-resolution drift but should not be used to assign an asymptotic critical exponent.

\subsection{Critical prefactors and integrability of directional length density}
\label{subsec:critical_integrability_results}

The separation between the resolution exponent and the critical prefactor can be made explicit with the response-proxy family in Eq.~\eqref{eq:vdw_response_proxy_analysis}.  Near the van der Waals critical point,
\begin{equation}
    s(T,v)\simeq A\tau+B u^2,
    \qquad
    \tau=T-T_c,
    \qquad
    u=v-v_c,
    \label{eq:vdw_local_stability_scaling}
\end{equation}
with positive constants \(A\) and \(B\).  The following proposition concerns the fixed-window prefactor defined by the sequential limit in Eq.~\eqref{eq:sequential_limits}; it isolates the part of the argument that is independent of the finite-grid implementation.

\begin{proposition}[Critical-window prefactor for a singular directional metric]
\label{prop:critical_prefactor}
Let \(W_\Delta=[\Delta,\tau_0]\times[-u_0,u_0]\), with \(0<\Delta<\tau_0\).  Suppose that, in coordinates \((\tau,u)\), a diagonal metric satisfies
\begin{equation}
\begin{aligned}
    0<c_T &\le g_{\tau\tau}\le C_T<\infty,\\
    c_v(A\tau+B u^2)^{-q} &\le g_{uu}
    \le C_v(A\tau+B u^2)^{-q},
\end{aligned}
    \label{eq:critical_metric_comparison}
\end{equation}
with positive constants independent of \(\Delta\).  For a Hilbert-type axis-aligned traversal whose local \(u\)-step density in the critical window is bounded above and below independently of \(\Delta\), the singular \(u\)-direction contribution \(C_{\Delta,u}\) to the two-dimensional prefactor in \(L_\eps(\Delta)\sim C_\Delta\eps^{-1}\) is comparable to
\begin{equation}
    I_v(\Delta)
    =
    \int_\Delta^{\tau_0}\!\dd\tau
    \int_{-u_0}^{u_0}\!\dd u\,
    \left(A\tau+B u^2\right)^{-q/2}.
    \label{eq:vdw_directional_integral}
\end{equation}
Consequently,
\begin{equation}
    C_{\Delta,u}\asymp I_v(\Delta)
    \sim
    \begin{cases}
    I_0, & q<3,\\[2pt]
    \log(1/\Delta), & q=3,\\[2pt]
    \Delta^{-(q-3)/2}, & q>3,
    \end{cases}
    \label{eq:vdw_integrability_classification}
\end{equation}
where \(I_0<\infty\) and \(\asymp\) denotes equality up to positive multiplicative constants independent of \(\Delta\).
\end{proposition}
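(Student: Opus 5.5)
The plan has two parts: first reduce the prefactor to a Riemann sum, then compute the integral asymptotics.

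\emph{Reduction of the prefactor to a Riemann sum.} For an axis-aligned Hilbert-type traversal of $W_\Delta$ with mesh $\eps$, every segment is either a $\tau$-step or a $u$-step of coordinate length $\asymp\eps$. Because the metric is diagonal, the midpoint estimator \eqref{eq:analysis_discrete_length} splits exactly as $L_\eps=L_{\eps,\tau}+L_{\eps,u}$. The $u$-part is $\sum_{u\text{-steps}}\eps\sqrt{g_{uu}(m_n)}$.

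By hypothesis, the number of $u$-steps inside any subrectangle of side $\gg\eps$ is comparable to its area divided by $\eps^2$, with constants independent of $\Delta$. Hence $\eps\,L_{\eps,u}$ is comparable to a Riemann sum of $\sqrt{g_{uu}}$ over the cells. By \eqref{eq:critical_metric_comparison}, each summand lies between $\sqrt{c_v}$ and $\sqrt{C_v}$ times $(A\tau+Bu^2)^{-q/2}$.

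On the fixed window $W_\Delta$ the integrand is continuous, so the Riemann sum converges to the integral as $\eps\to0$ at fixed $\Delta$. This is the sequential limit \eqref{eq:sequential_limits}. It yields $C_{\Delta,u}\asymp I_v(\Delta)$ with constants depending only on $c_v$, $C_v$ and the step-density bounds.

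The $\tau$-part, by the same argument, is comparable to $\int\!\!\int\sqrt{g_{\tau\tau}}$, which is bounded uniformly in $\Delta$ by $\sqrt{C_T}\,\tau_0\,2u_0$. It therefore does not affect the classification of the singular part.

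\emph{Asymptotics of $I_v$.} I would compute the inner integral by substituting $u=\sqrt{A\tau/B}\,w$:
\[
\int_{-u_0}^{u_0}(A\tau+Bu^2)^{-q/2}\,\dd u=(A\tau)^{(1-q)/2}B^{-1/2}\int_{|w|\le u_0\sqrt{B/(A\tau)}}(1+w^2)^{-q/2}\,\dd w .
\]
For $q>1$ the $w$-integral converges to a finite positive constant as $\tau\to0$ and is bounded above and below uniformly for $\tau\le\tau_0$. So the inner integral is $\asymp\tau^{(1-q)/2}$. For $q\le1$ the inner integral stays bounded, and everything is finite anyway.

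The outer integral is then $\int_\Delta^{\tau_0}\tau^{(1-q)/2}\,\dd\tau$. It converges as $\Delta\to0$ iff $(q-1)/2<1$, i.e.\ $q<3$, giving $I_0$. It is $\asymp\log(1/\Delta)$ when $q=3$. It is $\asymp\Delta^{(3-q)/2}=\Delta^{-(q-3)/2}$ when $q>3$.

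Equivalently, one can check integrability directly in the anisotropic polar coordinates $\tau=r^2$, $u=r\theta$. The volume element is $\asymp r^2\,\dd r$ against an integrand $\asymp r^{-q}$, so the threshold $q=3$ is confirmed.

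\emph{Main obstacle.} The delicate step is the first one: showing that the comparison between the discrete sum and the integral has constants independent of $\Delta$. Near $\tau=\Delta$, the integrand varies on the scale $\sqrt{\Delta}$ in $u$ and $\Delta$ in $\tau$.

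Because the order of limits is $\eps\to0$ first, I would avoid uniform-in-$\Delta$ Riemann estimates altogether. At fixed $\Delta$, the exact limit is $\lim\eps L_{\eps,u}=\int\!\!\int\rho\,\sqrt{g_{uu}}$, where $\rho$ is the limiting $u$-step density (a weak limit). The only uniform input needed is then the pointwise bound $\rho\asymp1$, which is the stated traversal hypothesis. Combining this with the metric sandwich gives the two-sided bound with $\Delta$-independent constants.
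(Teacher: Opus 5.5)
Your proposal is correct and follows the paper's proof essentially step for step. The diagonal metric splits the axis-aligned length into $\tau$- and $u$-parts, and $\eps L_{\eps,u}$ is a Riemann sum for $\int\!\!\int\sqrt{g_{uu}}$; your fixed-$\Delta$, $\eps\to0$ weak-limit formulation just makes the paper's order of limits explicit. The rescaling $u\propto\sqrt{\tau}\,w$ then reduces the classification to $\int_\Delta^{\tau_0}\tau^{(1-q)/2}\,\dd\tau$. One small slip: at exactly $q=1$ the inner integral is not bounded but grows like $B^{-1/2}\log(1/\tau)$; this is still integrable at $\tau=0$, so your conclusion $I_0<\infty$ is unaffected.
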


\begin{proof}
For an axis-aligned Hilbert-type grid traversal at mesh size \(\delta\), the total contribution of \(u\)-oriented steps is a Riemann sum for \(\int_{W_\Delta}\sqrt{g_{uu}}\,\dd\tau\dd u\), multiplied by \(\delta^{-1}\) and by step-density and aspect-ratio constants.  Since the covering scale satisfies \(\eps\asymp\delta\) on each fixed window, the prefactor \(C_{\Delta,u}=\eps L_{\eps,u}\) has the same \(\Delta\)-dependence as the integral in Eq.~\eqref{eq:vdw_directional_integral}.  The regular \(\tau\)-direction contribution is bounded because \(g_{\tau\tau}\) is uniformly bounded on a finite window.

It remains to classify \(I_v(\Delta)\).  Setting \(u=\sqrt{\tau}\,w\) gives an inner factor \(\tau^{(1-q)/2}\int_{-u_0/\sqrt{\tau}}^{u_0/\sqrt{\tau}}(A+B w^2)^{-q/2}\dd w\).  This produces an integrable contribution for \(q<3\), a \(\tau^{-1}\) singularity at \(q=3\), and a nonintegrable \(\tau^{(1-q)/2}\) singularity for \(q>3\).  Integration over \(\tau\in[\Delta,\tau_0]\) yields Eq.~\eqref{eq:vdw_integrability_classification}.
\end{proof}

Thus the baseline \(q=1\) proxy is integrable, \(q=3\) is marginal, and \(q=4\) is a nonintegrable stress test with asymptotic exponent \(1/2\).

Figure~\ref{fig:critical_integrability} confirms this classification at the level needed for the present work.  The \(q=1\) prefactor changes only weakly over the standard cutoff range.  The \(q=3\) run exhibits much larger growth, but the local slopes drift downward from approximately \(0.63\) at the largest cutoff intervals to approximately \(0.36\) at the smallest intervals.  Although an unconstrained finite-window power-law fit may describe this limited range, Eq.~\eqref{eq:vdw_integrability_classification} identifies \(q=3\) as marginal rather than genuinely power divergent.  The \(q=4\) response proxy, by contrast, shows a strong prefactor growth with a dynamic range of about \(28\) in the extended run.  The local effective slope decreases from about \(0.88\) at large cutoffs to about \(0.62\) in the order-10 small-\(\Delta\) check, drifting toward the asymptotic value \(1/2\).  Over this same set of runs the resolution exponent remains close to \(p=1\), demonstrating that the critical prefactor modifies the amplitude without changing the finite-resolution codimension exponent.

\begin{figure*}[t]
    \centering
    \includegraphics[width=0.95\textwidth]{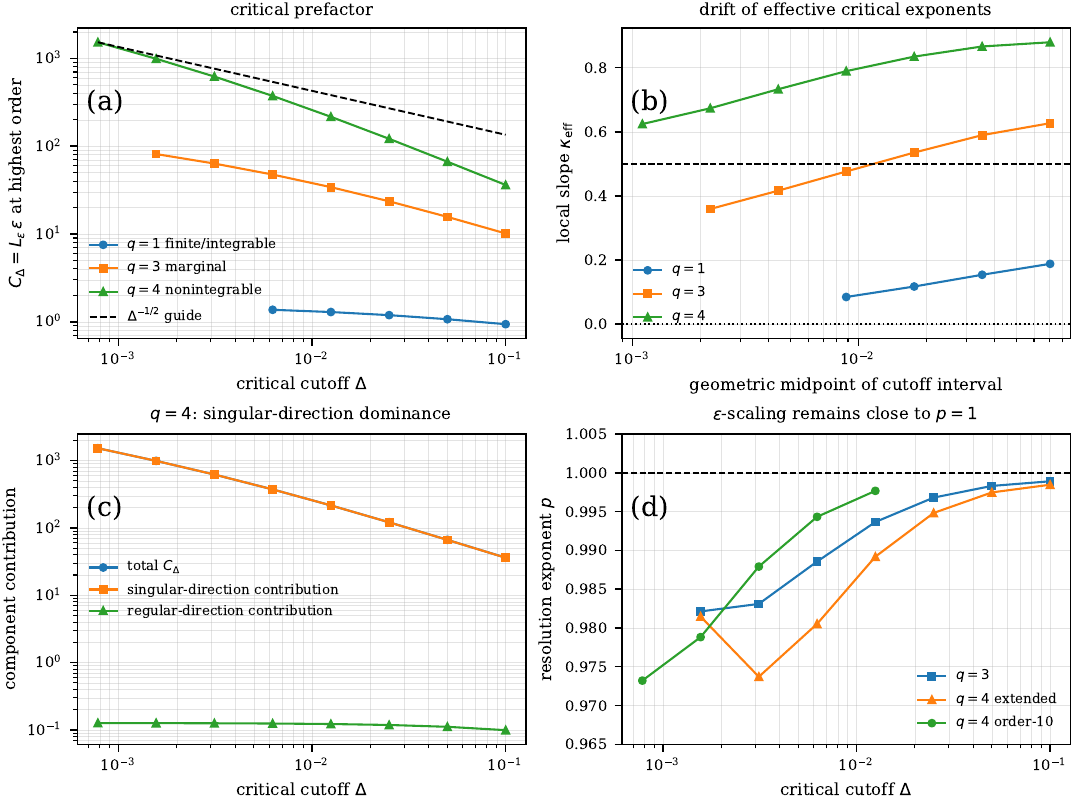}
    \caption{Critical integrability classification for the van der Waals response-proxy metric \(g_{vv}=s^{-q}\).  (a) Highest-order prefactors \(C_\Delta=L_\eps\eps\) for \(q=1\), \(q=3\), and \(q=4\).  The dashed line is a \(\Delta^{-1/2}\) guide for the \(q=4\) asymptotic prediction.  (b) Local effective slopes \(\kappa_{\rm eff}\) for the prefactor.  The \(q=3\) slopes drift downward, consistent with marginal behavior over a finite window, while the \(q=4\) slopes drift toward the theoretical value \(1/2\).  (c) Component decomposition for \(q=4\), showing dominance of the singular \(v\)-direction contribution.  (d) The resolution exponent remains close to \(p=1\) for the marginal and nonintegrable stress-test runs.}
    \label{fig:critical_integrability}
\end{figure*}

The same integrability logic explains the Landau Ising results.  In the supercritical single-phase window, the critical core has \(h\sim t^{3/2}\), while \(\chi^{-1}\sim t\) in the core.  For the friction-like metric \(g_{hh}\sim\chi^{1+z\nu}\), the singular directional contribution scales as
\begin{equation}
    I_h(\Delta)
    \sim
    \int_\Delta^{t_0}\dd t\; t^{3/2}\,t^{-(1+z\nu)/2}.
    \label{eq:ising_directional_integral_scaling}
\end{equation}
It is finite for \(z\nu<4\), marginal for \(z\nu=4\), and divergent as \(\Delta^{-(z\nu/2-2)}\) for \(z\nu>4\).  The static Fisher-like metric corresponds to \(z\nu=0\), and the baseline friction-like run used \(z\nu=1\).  Both are therefore fixed-window integrable, consistent with the modest prefactor variation in Table~\ref{tab:baseline_exponents} and Appendix~\ref{app:standard_diagnostics}.

\subsection{Traversal-locality control with Morton/Z-order}
\label{subsec:morton_control_results}

The lower-bound exponent is independent of a particular ordering, but finite-resolution amplitudes can depend on how local the ordering and its connectors are.  We therefore repeated the two-dimensional analysis with Morton/Z-order traversal as a control.  Morton order visits the same grid cells as the Hilbert traversal but introduces nonlocal jumps at recursive block boundaries.  In the comparison below each jump is interpreted through the straight coordinate-segment transition-cost estimator of Eq.~\eqref{eq:analysis_discrete_length}.  This choice is sufficient for a locality benchmark and for comparing amplitudes, but it is not a unique physical interpolation prescription.  Under this specified estimator, Morton order is expected to preserve the leading \(\eps^{-1}\) scaling while increasing the ordering-cost prefactor.

Figure~\ref{fig:morton_locality_control} summarizes the comparison.  For the baseline model groups, the mean Hilbert exponent lies in the range \(0.9981\)--\(1.0057\), while the mean Morton exponent lies in \(1.0138\)--\(1.0353\).  Thus Morton order produces a slightly larger finite-window exponent, but the result remains close to the two-dimensional value \(p=1\).  The fitted prefactor ratio \(C_{\rm M}/C_{\rm H}\), by contrast, is systematically larger than unity.  Across the baseline groups the mean ratio ranges from \(1.124\) to \(1.749\), confirming that traversal locality mainly affects the amplitude.

The cutoff-dependent ratios in Fig.~\ref{fig:morton_locality_control}(c) show the same effect in the standard critical windows.  The nonintegrable \(q=4\) stress test in Fig.~\ref{fig:morton_locality_control}(d) exhibits prefactor growth for both traversals.  At the highest order, the ratio \(C_{\Delta,\rm M}^{\rm hi}/C_{\Delta,\rm H}^{\rm hi}\) ranges from \(1.207\) to \(1.327\) over the simulated cutoff window.  The ratio decreases toward smaller \(\Delta\), indicating that as the singular \(v\)-direction length density dominates the total length, the relative contribution of Morton inter-block jumps becomes less important.  The local slopes at the smallest cutoff interval are \(\kappa_{\rm eff}\simeq0.681\) for Hilbert and \(\kappa_{\rm eff}\simeq0.617\) for Morton, both still finite-window estimates relative to the asymptotic \(q=4\) value \(1/2\).

\begin{figure*}[t]
    \centering
    \includegraphics[width=0.95\textwidth]{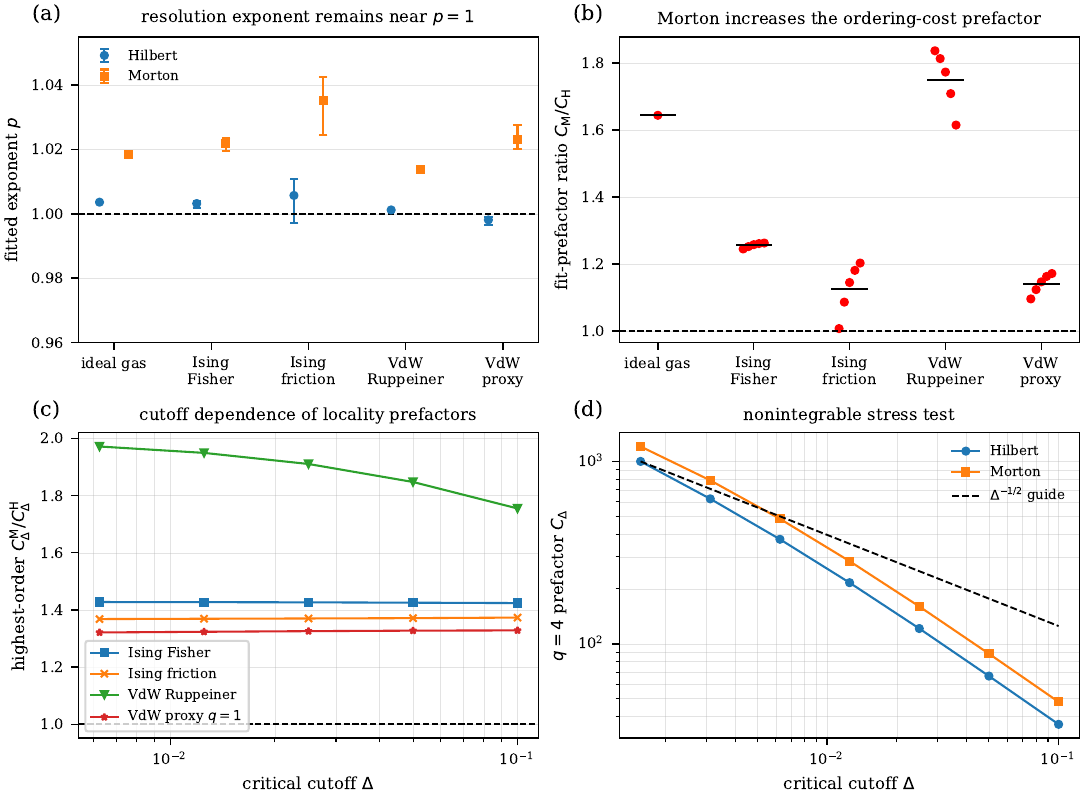}
    \caption{Morton/Z-order traversal as a locality-control benchmark.  (a) Fitted resolution exponents for Hilbert and Morton traversals across the baseline model groups.  Both remain close to \(p=1\).  (b) Fitted prefactor ratio \(C_{\rm M}/C_{\rm H}\), showing systematic amplitude enhancement for Morton ordering.  (c) Highest-order prefactor ratios in the standard critical windows.  (d) Nonintegrable van der Waals response-proxy stress test with \(q=4\), where both traversals show critical prefactor growth while Morton retains a larger amplitude.}
    \label{fig:morton_locality_control}
\end{figure*}

The ideal-gas subdivision check gives \(p_{\rm M}=1.0183\) and \(C_{\rm M}/C_{\rm H}=1.644\) with one midpoint segment per jump, compared with \(p_{\rm M}=1.0160\) and \(C_{\rm M}/C_{\rm H}=1.662\) when each Morton jump is subdivided into four segments.  The small change supports the interpretation that the Morton/Hilbert amplitude difference is not a midpoint-rule artifact in the regular benchmark.

\subsection{Dimensional dependence of the resolution exponent}
\label{subsec:dimensional_dependence}

The lower bound predicts a resolution exponent \(p=d-1\) set by the dimension of the state-space window, so the cleanest nontrivial test beyond the two-dimensional results above is to change \(d\).  We therefore repeated the analysis for a three-dimensional flat benchmark, the three-coordinate ideal-gas Ruppeiner metric \(g=\mathrm{diag}(c_1/x_1^2,c_2/x_2^2,c_3/x_3^2)\), which is the natural extension of Eq.~\eqref{eq:ideal_metric} and is brought to Euclidean form by \(y_i=\sqrt{c_i}\,\ln x_i\).  This provides a curvature-free \(d=3\) window with a closed-form Riemannian volume against which the prefactor can be checked.  The three-dimensional Hilbert traversal is generated by Skilling's arbitrary-dimension transform between Hilbert indices and integer coordinates \cite{Skilling2004}; as in two dimensions, consecutive sub-cubes share a face, so the traversal is locality preserving.  As an independent locality-preserving control we also use a three-dimensional boustrophedon (serpentine) raster, which must attain the same leading exponent because the lower bound is traversal independent.

The numerical conventions are identical to the two-dimensional study: a midpoint-rule discrete length, a coordinate covering radius \(\eps_n\) equal to the half-diagonal of a grid cell, and a log--log ordinary-least-squares slope fitted over the highest available orders.  Figure~\ref{fig:dimensional_dependence} shows the result.  After normalizing by the fitted prefactor, the three-dimensional lengths follow the reference \(\eps^{-2}\) law over the available orders, and the fitted exponents are \(p=2.007\pm0.003\) for the Hilbert traversal and \(p=2.011\pm0.002\) for the boustrophedon raster, with \(R^2>0.99999\) in both cases.  Both lie within a few parts in \(10^3\) of the predicted \(p=d-1=2\), and the two traversals of different locality give the same exponent.  Together with the two-dimensional value \(p\simeq1\), this confirms the dimensional dependence \(p=d-1\) directly: the resolution exponent is the codimension of a curve in the \(d\)-dimensional state space, and it is the dimension of the window, not the choice of locality-preserving ordering, that fixes it.

\begin{figure}[t]
    \centering
    \includegraphics[width=0.46\textwidth]{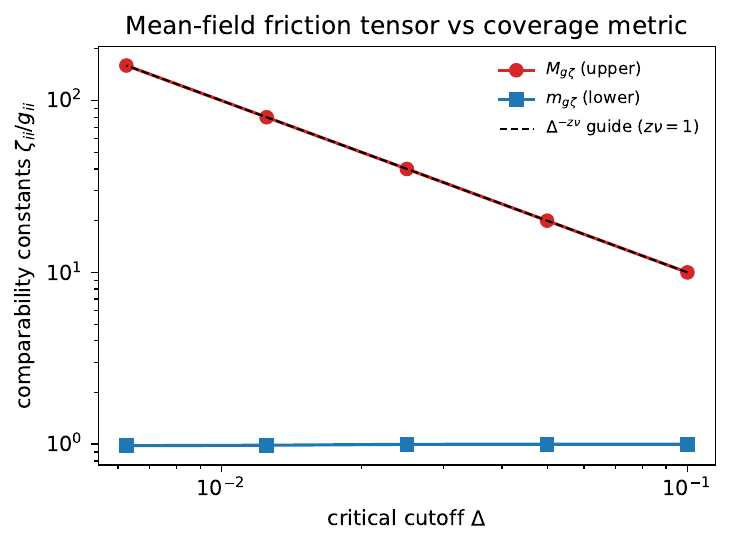}
    \caption{Uniform comparability of the mean-field Ising Model-A friction ansatz of Sec.~\ref{subsubsec:ising_friction} and the static Fisher coverage metric on the supercritical window \(\M_{\rm Ising}(\Delta)\).  Plotted are the extremal componentwise ratios \(\zeta_{ii}/g_{ii}\): the lower comparison constant \(m_{g\zeta}\simeq1\) (squares) and the upper constant \(M_{g\zeta}\) (circles), with a \(\Delta^{-z\nu}\) guide (\(z\nu=1\), dashed) from critical slowing down.  Both constants are finite and positive on every fixed window, the hypothesis under which the dissipation bounds Eqs.~\eqref{eq:dissipation_lower_bound}--\eqref{eq:time_budget} hold for the modeled \(\zeta\).}
    \label{fig:friction_comparability}
\end{figure}

\begin{figure*}[t]
    \centering
    \includegraphics[width=0.8\textwidth]{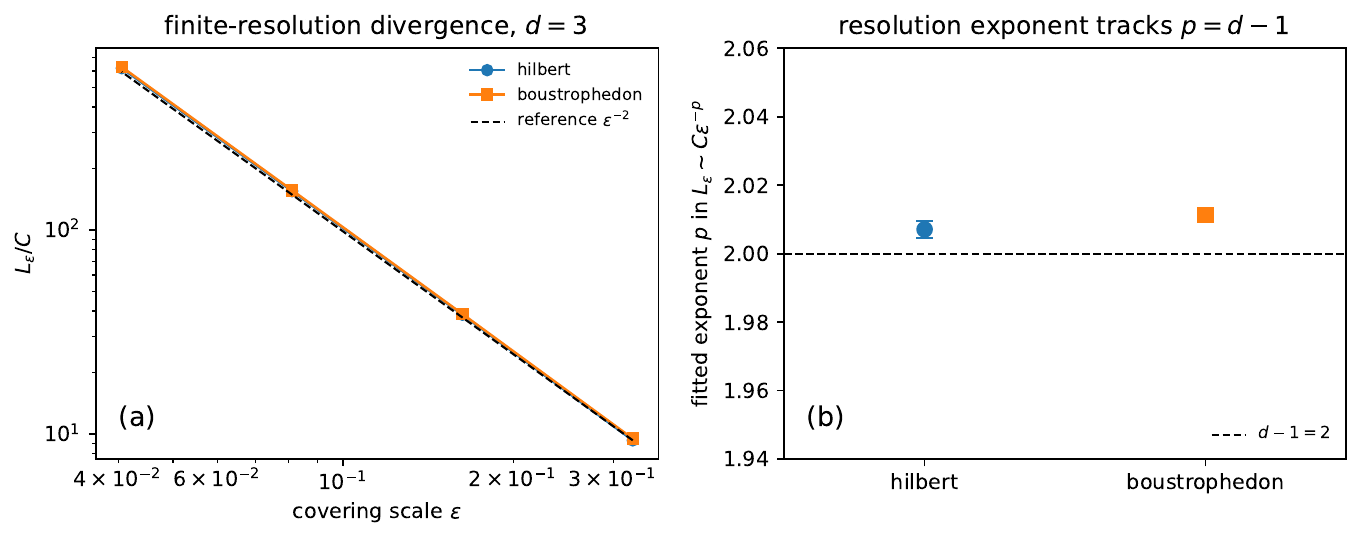}
    \caption{Dimensional dependence of the finite-resolution exponent, tested on the flat three-dimensional ideal-gas Ruppeiner benchmark.  (a) Normalized lengths \(L_\eps/C\) for the Skilling Hilbert traversal and a boustrophedon raster follow the reference \(\eps^{-2}\) law expected for \(d=3\).  (b) Fitted exponents \(p\) in \(L_\eps\sim C\eps^{-p}\) for both traversals, lying within a few parts in \(10^3\) of the predicted \(p=d-1=2\) (dashed line).  The exponent is fixed by the window dimension and is independent of the locality-preserving ordering.}
    \label{fig:dimensional_dependence}
\end{figure*}

\section{Discussion}
\label{sec:discussion}

The results separate four aspects of thermodynamic space filling.  First, the exponent is geometric.  On any regular \(d\)-dimensional state-space window, an \(\eps\)-dense rectifiable traversal must have length of order \(\eps^{1-d}\), because an \(\eps\)-tube around a one-dimensional set has volume of order \(\Length\,\eps^{d-1}\).  The numerical Hilbert traversals support this exponent across flat, curved, critical-window, and response-proxy examples, and the three-dimensional ideal-gas benchmark gives \(p\simeq2=d-1\).  Morton/Z-order confirms that the exponent is not tied to the locality-optimized Hilbert ordering, although the amplitude is.

Second, finite operational resolution changes the allocation of resources, not the scaling law.  A measurement or simulation floor \(\Delta_g\) replaces the singular continuum cost by \(\Length_{\rm op}=\Theta(\max\{\eps,\Delta_g\}^{1-d})\).  In the harmonic-trap friction metric, a serpentine raster gives \(L_\zeta\sim\Delta_g^{-1}\) and \(W_{\rm fixed\;time}^{(2)}\sim\Delta_g^{-2}\), whereas fixed dwell time keeps the two-dimensional discrete step cost approximately bounded but makes \(\tau_{\rm dwell}\sim\Delta_g^{-2}\).  The cost is therefore paid either as fixed-time dissipation or as acquisition time and sample count.

Third, criticality enters through prefactors.  A divergent local metric component is insufficient by itself: the relevant quantity is the integrated directional length density after the fixed-\(\Delta\) resolution limit has been taken.  The static Ising Fisher metric, the Model-A Ising friction ansatz with \(z\nu=1\), and the baseline van der Waals response proxy with \(q=1\) contain local critical enhancement but have finite fixed-window directional integrals.  The \(q=4\) response-proxy stress test makes this density nonintegrable and produces a growing \(C_\Delta\).  The structure is
\begin{equation}
\begin{aligned}
    L_\eps(\Delta)&\sim C_\Delta\,\eps^{-1},\\
    C_\Delta&
    \begin{cases}
    \to C_0, & \text{integrable singularity},\\
    \sim \log(1/\Delta), & \text{marginal singularity},\\
    \sim \Delta^{-\kappa}, & \text{nonintegrable singularity}.
    \end{cases}
\end{aligned}
    \label{eq:discussion_combined_scaling}
\end{equation}
For the van der Waals proxy, the fixed-window asymptotic estimate gives \(\kappa=(q-3)/2\); the measured \(q=4\) slopes are finite-window effective slopes consistent with drift toward this behavior.

This integrability criterion is complementary to single-path critical traversability.  Basri and Raz \cite{BasriRaz2025} analyze whether an optimal geodesic crossing a second-order transition has finite thermodynamic length.  Here the task is stronger: a curve must cover an entire window to resolution \(\eps\).  The relevant integral is accumulated over the cutoff window and then multiplies the universal codimension factor.  A singularity can therefore be benign for a selected geodesic but costly for exhaustive coverage, or conversely.

The response-proxy results should not be read as microscopic predictions for van der Waals finite-time dissipation.  The exponent \(q\) parametrizes a prescribed singular Riemannian metric; it is not derived from transport theory and does not encode a universal dynamic exponent.  A first-principles friction tensor would require generalized-force time-correlation functions and may contain kinetic coefficients and dynamic critical exponents \cite{HohenbergHalperin1977}.  The proxy family is useful because it isolates when singular directional length density changes \(C_\Delta\).  Dissipation statements require either a microscopic \(\zeta\) or a proof that \(\zeta\) uniformly dominates the coverage metric.

Fourth, traversal locality affects amplitudes.  Hilbert and Peano approximants realize the optimal exponent with local refinements.  Morton/Z-order visits the same cells but introduces nonlocal jumps; under the straight coordinate-segment estimator used here, the fitted prefactor ratio \(C_{\rm M}/C_{\rm H}\) is systematically larger while the leading two-dimensional exponent remains near unity.  Thus the exponent reflects dimension, whereas the coefficient records the metric, ordering, and connector model.

Several limitations remain.  The numerical implementation uses rectangular coordinate windows and center-to-center traversals, so it tests exponents rather than optimizing constants.  The operational floor \(\Delta_g\) must be supplied by apparatus resolution, numerical discretization, estimator tolerance, or control noise in a concrete application.  Morton lengths correspond to the specified coordinate-segment transition-cost estimator; a geodesic or experimentally constrained interpolation would change the prefactor.  The mean-field Ising friction-like metric is a Model-A scaling ansatz, and the Markov-jump and harmonic-trap examples are solvable noncritical models rather than substitutes for microscopic friction tensors in interacting critical systems.  Future work should derive such tensors in critical fluids and spin systems, compare adaptive and volume-form-based traversals \cite{ChennakesavaluRotskoff2023,EngelSmithBrenner2023}, and extend the construction to higher-dimensional control manifolds.

\section{Conclusion}
\label{sec:conclusion}

Finite-resolution exhaustive traversal of a thermodynamic state space has an unavoidable length cost.  For a compact regular \(d\)-dimensional window with positive volume, any rectifiable \(\eps\)-dense curve satisfies \(\Length_g[H_\eps]\ge C_g\eps^{1-d}-O(\eps)\).  The estimate is a classical covering/tube bound; its significance here is that it quantifies the resource cost of resolving a whole thermodynamic window rather than moving between two endpoints.

When the metric is a physical friction tensor, or when the physical friction tensor uniformly dominates the coverage metric, the length law becomes a slow-driving dissipation--time constraint.  The detailed-balance three-state jump process and overdamped harmonic trap provide explicit microscopic examples.  In the trap, a two-control serpentine sweep obeys \(L_\zeta\sim\Delta_g^{-1}\) and fixed-time \(W_{\rm ex}^{(2)}\sim\Delta_g^{-2}\), while fixed dwell time shifts the same refinement cost to total acquisition time.

Finite observation or simulation resolution cuts off the continuum divergence at \(L_{\rm op}=\Theta(\max\{\eps,\Delta_g\}^{1-d})\), but improving \(\Delta_g\) restores the codimension-one resource law.  Critical windows can add a cutoff-dependent prefactor, determined by whether the singular directional length density is integrable.  The van der Waals response-proxy metrics demonstrate this integrability mechanism but remain diagnostic Riemannian tests unless matched to a microscopic friction tensor.

The leading exponent is geometric and robust across the tested models and traversal orderings; the amplitude is sensitive to the thermodynamic metric, critical prefactors, operational cutoff, traversal locality, and connector choice.  Thus a one-dimensional parameter can order a thermodynamic state space, but arbitrarily fine exhaustive traversal cannot keep thermodynamic length, slow-driving duration, dwell-time budget, and quadratic dissipation budget all finite.

\section*{Acknowledgments}
This work was supported by JSPS KAKENHI (Grant No. 22K14177) and JST PRESTO (Grant No. JPMJPR23O7). The author gratefully acknowledges the laboratory staff for their valuable technical and administrative support throughout this work. The author is also deeply grateful to his family for their unwavering support and encouragement. ChatGPT (GPT-5.6, OpenAI) was used to provide initial English-language drafting assistance during manuscript preparation. All AI-assisted text was critically reviewed and revised by the author, who takes full responsibility for the accuracy, integrity, and originality of the final manuscript.

\begin{figure*}[t]
    \centering
    \includegraphics[width=0.95\textwidth]{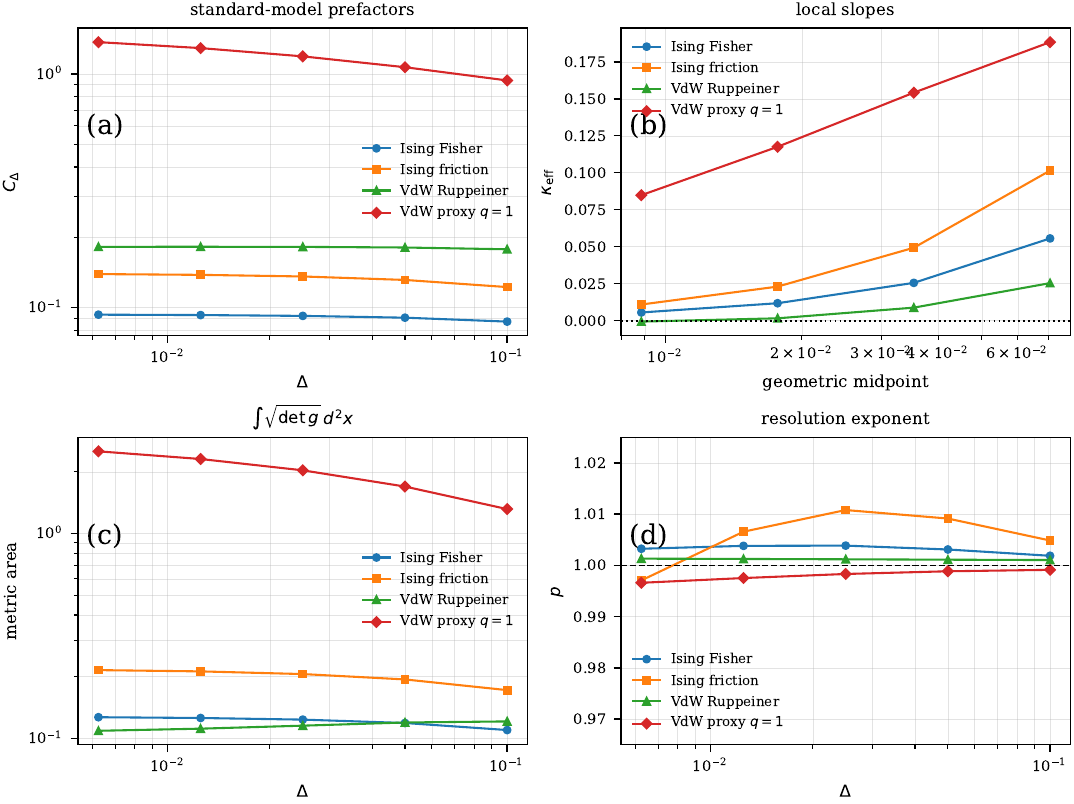}
    \caption{Baseline critical-cutoff diagnostics.  (a) Highest-order prefactors \(C_\Delta\) for standard critical-window models.  (b) Local effective slopes for \(C_\Delta\).  (c) Riemannian metric-area estimates.  (d) Resolution exponents \(p\), which remain close to unity across the baseline cutoff windows.}
    \label{fig:standard_cutoff_behavior}
\end{figure*}

\appendix

\section{Numerical parameter values, fit diagnostics, and pseudocode}
\label{app:numerical_details}

This appendix gives a source-code-independent specification of the numerical procedure used in Sec.~\ref{sec:analysis}.  The purpose is to make the reported scaling fits reproducible without relying on implementation-specific script names.  All traversals are deterministic.  No random sampling, stochastic initialization, or ensemble averaging is used.

For an order-$n$ rectangular grid, the number of cells is $2^n\times 2^n$.  If the coordinate window is $[x_-,x_+]\times[y_-,y_+]$, the cell widths are
\begin{equation}
    \Delta x_n=\frac{x_+-x_-}{2^n},\qquad
    \Delta y_n=\frac{y_+-y_-}{2^n},
\end{equation}
and the coordinate covering radius reported in the figures is
\begin{equation}
    \eps_n=\frac{1}{2}\sqrt{\Delta x_n^2+\Delta y_n^2}.
    \label{eq:appendix_coordinate_epsilon}
\end{equation}
For each fixed regular cutoff window this coordinate scale is uniformly comparable to the intrinsic Riemannian covering radius, as discussed around Eq.~\eqref{eq:numerical_epsilon_comparison}.

\begin{table*}[t]
\caption{Model parameters, coordinate windows, and metric components used in the numerical analysis.  Here $t=T-T_c$, $\beta=1/T$, $m$ is the real solution of $b m^3+a t m-h=0$, $\chi^{-1}=a t+3bm^2$, and $s(T,v)=(v-b)^{-2}-2a/(T v^3)$.}
\label{tab:appendix_model_windows}
\begin{ruledtabular}
\begin{tabular}{p{0.17\textwidth}p{0.13\textwidth}p{0.25\textwidth}p{0.18\textwidth}p{0.21\textwidth}}
Case & Coordinates & Coordinate window & Parameters & Metric components \\
\hline
Ideal gas & $(u,v)$ & $u\in[1,4]$, $v\in[1,4]$ & $c_v=1.5$ & $g_{uu}=c_v/u^2$, $g_{vv}=1/v^2$ \\
Landau Ising, Fisher-like & $(t,h)$ & $t\in[\Delta,\Delta+0.25]$, $h\in[-0.2,0.2]$ & $T_c=1$, $a=b=C=1$ & $g_{tt}=C/T^2$, $g_{hh}=\beta\chi$ \\
Landau Ising, friction-like & $(t,h)$ & $t\in[\Delta,\Delta+0.25]$, $h\in[-0.2,0.2]$ & $T_c=1$, $a=b=C=1$, $z\nu=1$ & $g_{tt}=C/T^2$, $g_{hh}=\beta\chi^{1+z\nu}$ \\
VdW Ruppeiner pullback & $(T,v)$ & $T\in[T_c+\Delta,T_c+\Delta+0.15]$, $v\in[v_c-0.5,v_c+0.5]$ & $a=b=1$, $c_v=1.5$, $T_c=8/27$, $v_c=3$ & $g_{TT}=c_v/T^2$, $g_{vv}=s(T,v)$ \\
VdW response proxy & $(T,v)$ & $T\in[T_c+\Delta,T_c+\Delta+0.15]$, $v\in[2.5,3.5]$ & $a=b=1$, $c_v=1.5$, $q=1,3,4$ & $g_{TT}=c_v/T^2$, $g_{vv}=s(T,v)^{-q}$ \\
Harmonic-trap operational raster & $(a,k)$ & $a\in[0,1]$, $k\in[1,4]$ & $\beta=\mu=1$ & $\zeta_{aa}=1$, $\zeta_{kk}=(4k^3)^{-1}$ \\
\end{tabular}
\end{ruledtabular}
\end{table*}

\begin{table*}[t]
\caption{Grid orders, fit windows, cutoff values, and prefactor estimators.  The notation $4$--$8$ means all integer orders in that range.  The default critical cutoffs are $\Delta=0.1,0.05,0.025,0.0125,0.00625$.}
\label{tab:appendix_run_conditions}
\begin{ruledtabular}
\begin{tabular}{p{0.22\textwidth}p{0.16\textwidth}p{0.13\textwidth}p{0.13\textwidth}p{0.28\textwidth}}
Data set & Traversal and model class & Computed orders & Fit orders & Cutoffs and estimator \\
\hline
Baseline scaling, Table~\ref{tab:baseline_exponents}, and Fig.~\ref{fig:standard_cutoff_behavior} & Hilbert; all standard models & $4$--$8$ & $5$--$8$ & Default critical cutoffs where applicable; $C_\Delta^{\rm hi}=L_{\eps_{\min}}\eps_{\min}$ from order $8$ \\
$q=3$ marginal VdW proxy & Hilbert; VdW response proxy & $6$--$9$ & $6$--$9$ & $\Delta=0.1$ to $0.0015625$ by factors of two; highest-order prefactor from order $9$ \\
$q=4$ nonintegrable VdW proxy & Hilbert; VdW response proxy & $6$--$9$ & $6$--$9$ & $\Delta=0.1$ to $0.0015625$ by factors of two; highest-order prefactor from order $9$ \\
$q=4$ small-$\Delta$ check & Hilbert; VdW response proxy & $8$--$10$ & $8$--$10$ & $\Delta=0.0125$ to $0.00078125$ by factors of two; highest-order prefactor from order $10$ \\
Baseline locality comparison & Hilbert and Morton/Z-order; all standard models & $4$--$8$ & $5$--$8$ & Default critical cutoffs where applicable; straight coordinate-segment transition-cost estimator \\
$q=4$ locality stress test & Hilbert and Morton/Z-order; VdW response proxy & $6$--$9$ & $6$--$9$ & $\Delta=0.1$ to $0.0015625$ by factors of two; straight coordinate-segment transition-cost estimator \\
Ideal-gas subdivision check & Hilbert and Morton/Z-order; ideal gas & $4$--$8$ & $5$--$8$ & Each Morton jump subdivided into four equal coordinate subsegments before midpoint quadrature \\
Operational harmonic-trap raster & Serpentine raster; harmonic-trap friction metric & $3$--$9$ & $5$--$9$ & $a\in[0,1]$, $k\in[1,4]$, $\beta=\mu=1$, $\tau=1$, $t_{\rm dwell}=1$; quadratic fixed-time and fixed-dwell-time estimators \\
\end{tabular}
\end{ruledtabular}
\end{table*}

The exponent fits use
\begin{equation}
    \log L_n=\alpha-p\log\eps_n+r_n,
    \qquad \alpha=\log C,
    \label{eq:appendix_log_fit}
\end{equation}
with ordinary least squares over the fit orders listed in Table~\ref{tab:appendix_run_conditions}.  The quoted uncertainty in $p$ is the standard error of the fitted log--log slope.  It is therefore a regression diagnostic over the chosen grid orders, not a bootstrap error bar or an uncertainty from independent stochastic repetitions.  The reported residual diagnostics use the log residuals $r_n$ and the relative residuals $\exp(r_n)-1$.

For the operational harmonic-trap raster, the same log--log fitting convention is applied to four observables: the friction length $L_\zeta$, the fixed-total-time quadratic estimate $W_{\rm fixed\;time}^{(2)}=L_\zeta^2/\tau$, the fixed-dwell-time estimate $W_{\rm dwell}=\sum_i\ell_i^2/t_{\rm dwell}$, and the total dwell time $\tau_{\rm dwell}=N_{\rm seg}t_{\rm dwell}$.  The fitted exponents reported in Table~\ref{tab:operational_harmonic_fits} are regression diagnostics over the tail orders $n=5,\ldots,9$.

\begin{table*}[t]
\caption{Baseline Hilbert fit diagnostics by model family.  The table reports the largest standard error of $p$, the smallest log-space $R^2$, and the largest absolute relative residual over the fitted tail orders and, where applicable, over the listed critical cutoffs.}
\label{tab:appendix_baseline_residuals}
\begin{ruledtabular}
\begin{tabular}{lcccc}
Model family & Range of fitted $p$ & Max. s.e. of $p$ & Min. $R^2$ & Max. relative residual \\
\hline
Ideal gas & 1.003609 & 0.001135 & 0.999997 & 0.158\% \\
Ising Fisher & 1.001886--1.003869 & 0.000862 & 0.999999 & 0.111\% \\
Ising friction, $z\nu=1$ & 0.997081--1.010814 & 0.004312 & 0.999963 & 0.653\% \\
VdW Ruppeiner pullback & 1.001053--1.001324 & 0.000439 & $>0.999999$ & 0.062\% \\
VdW response proxy, $q=1$ & 0.996616--0.999141 & 0.000798 & 0.999999 & 0.102\%
\end{tabular}
\end{ruledtabular}
\end{table*}

\begin{table*}[t]
\caption{Additional fit and prefactor diagnostics for the marginal, nonintegrable, and Morton/Z-order runs.  ``Full tail'' means the fit orders in Table~\ref{tab:appendix_run_conditions}; ``core-resolved'' applies the critical-core resolution flag described below.}
\label{tab:appendix_additional_residuals}
\begin{ruledtabular}
\begin{tabular}{p{0.28\textwidth}p{0.13\textwidth}p{0.15\textwidth}p{0.16\textwidth}p{0.17\textwidth}}
Data set & Mean or range of $p$ & Prefactor diagnostic & Max. relative residual & Interpretation \\
\hline
VdW proxy $q=3$ & 0.982131--0.998917 & dynamic range $8.03$; smallest-interval local slope $0.359$ & 0.316\% & Marginal by the analytic integrability criterion; finite-window slopes drift downward \\
VdW proxy $q=4$, extended & 0.973703--0.998473 & dynamic range $27.68$; smallest-interval local slope $0.681$ & 0.989\% full tail; 0.441\% core-resolved & Nonintegrable stress test; finite-window drift toward $1/2$ \\
VdW proxy $q=4$, order-10 check & 0.973209--0.997682 & dynamic range $7.06$; smallest-interval local slope $0.624$ & 0.279\% & Smaller-cutoff check of the same drift \\
Baseline Morton comparison & Morton $p$ range $1.013127$--$1.042552$ across standard families & mean $C_{\rm M}/C_{\rm H}=1.124$--$1.749$ depending on family & 0.981\% & Same near-$\eps^{-1}$ exponent, larger locality-dependent amplitude \\
$q=4$ Morton stress test & Hilbert/Morton; order $6$--$9$ & $C_{\rm M}^{\rm hi}/C_{\rm H}^{\rm hi}=1.207$--$1.327$ & 3.73\% full tail; 1.82\% core-resolved & Larger finite-window residuals because the critical prefactor is rapidly varying \\
Ideal-gas Morton subdivision & Morton with four subsegments per jump & $p_{\rm M}=1.015987$, $C_{\rm M}/C_{\rm H}=1.662$ & 0.455\% & Morton/Hilbert amplitude difference is not primarily a midpoint-quadrature artifact
\end{tabular}
\end{ruledtabular}
\end{table*}

Critical-core resolution was monitored by a dimensionless score
\begin{equation}
    \rho_\Delta=\max\left(\frac{\Delta x_n}{x_{\rm core}(\Delta)},
    \frac{\Delta y_n}{y_{\rm core}(\Delta)}\right),
    \label{eq:appendix_core_score}
\end{equation}
with threshold $\rho_\Delta>0.5$ for an under-resolved critical core.  For the Landau Ising windows we used $x_{\rm core}\sim\Delta$ in the temperature-like direction and $y_{\rm core}\sim\Delta^{3/2}$ in the field direction.  For the van der Waals windows we used $x_{\rm core}\sim\Delta$ in the $T-T_c$ direction and $y_{\rm core}\sim\Delta^{1/2}$ in the $v-v_c$ direction.  With this criterion, the two smallest baseline Ising cutoffs, $\Delta=0.0125$ and $0.00625$, are flagged as under-resolved at the highest baseline order.  They are retained in the plots as finite-resolution diagnostics but are not used to assign an asymptotic critical exponent.

\noindent\textbf{Pseudocode for the numerical length and fit estimates.}
\begin{enumerate}
    \item Select a model, a cutoff $\Delta$ if required, and a rectangular coordinate window from Table~\ref{tab:appendix_model_windows}.  Define the metric $g(x)$ on that window.
    \item Select a traversal type $\mathcal T$.  For Hilbert traversal, order the $2^n\times2^n$ cell centers by the finite Hilbert ordering.  For Morton traversal, order the same centers by the bit-interleaved Morton key in Eq.~\eqref{eq:morton_key}.
    \item For each order $n$ listed in Table~\ref{tab:appendix_run_conditions}, compute the cell centers $x_0,\ldots,x_N$ and the coordinate covering radius $\eps_n$ from Eq.~\eqref{eq:appendix_coordinate_epsilon}.
    \item Initialize $L_n=0$.  For each consecutive pair $(x_k,x_{k+1})$, set $\Delta x_k=x_{k+1}-x_k$ and $x_{k+1/2}=(x_k+x_{k+1})/2$, then add
    \[
        \left[\Delta x_k^i g_{ij}(x_{k+1/2})\Delta x_k^j\right]^{1/2}
    \]
    to $L_n$.  For the ideal-gas subdivision check, split each Morton jump into four equal coordinate subsegments and apply the same midpoint rule to each subsegment.
    \item Fit $\log L_n=\alpha-p\log\eps_n+r_n$ over the specified fit orders.  Store $p$, its slope standard error, the log-space $R^2$, and the residuals $r_n$.
    \item For each critical cutoff, compute the highest-order prefactor $C_\Delta^{\rm hi}=L_{\eps_{\min}}\eps_{\min}$.  Estimate local critical slopes from adjacent values of $\log C_\Delta^{\rm hi}$ versus $\log\Delta$.
    \item Apply the core-resolution score in Eq.~\eqref{eq:appendix_core_score} to identify finite-resolution cutoffs whose critical core is not resolved by the grid.
\end{enumerate}

\noindent\textbf{Additional pseudocode for the operational harmonic-trap scan.}
\begin{enumerate}
    \item Set $a\in[0,1]$, $k\in[1,4]$, $\beta=\mu=1$, and use the friction metric $\zeta_{aa}=1$, $\zeta_{kk}=(4k^3)^{-1}$.
    \item For each order $n=3,\ldots,9$, form a $2^n\times2^n$ serpentine raster over cell centers.  The covering scale is the coordinate half-diagonal.
    \item For each segment, compute its midpoint friction length $\ell_i$ and accumulate $L_\zeta=\sum_i\ell_i$ and $\sum_i\ell_i^2$.
    \item Report the quadratic fixed-time estimate $W_{\rm fixed\;time}^{(2)}=L_\zeta^2/\tau$ with $\tau=1$, and $W_{\rm dwell}=\sum_i\ell_i^2/t_{\rm dwell}$, $\tau_{\rm dwell}=N_{\rm seg}t_{\rm dwell}$ with $t_{\rm dwell}=1$.
    \item Fit $L_\zeta$, $W_{\rm fixed\;time}^{(2)}$, $W_{\rm dwell}$, and $\tau_{\rm dwell}$ as powers of the covering scale.
\end{enumerate}

\section{Standard critical-cutoff diagnostics}
\label{app:standard_diagnostics}

Figure~\ref{fig:standard_cutoff_behavior} collects the baseline cutoff diagnostics that support the interpretation of the critical-window results.  The standard Ising Fisher, Ising friction with \(z\nu=1\), van der Waals Ruppeiner-pullback, and van der Waals response-proxy \(q=1\) cases show only finite or crossover-level prefactor variation over the cutoff range.  The resolution exponent remains close to \(p=1\) throughout.  These diagnostics are used as controls for the nonintegrable \(q=4\) stress test in the main text.

\bibliography{references}

\end{document}